\newtheoremstyle{style}
   {5pt}                   
   {5pt}                   
   {}                      
   {}                      
   {\normalfont\bfseries}  
   {.}                     
   {\newline}              
   {\textbf{\thmname{#1}\thmnumber{ #2}\thmnote{ (#3)}}} 
\theoremstyle{style}
\newtheorem{theorem}{Theorem}[section]
\newtheorem{definition}{Definition}[section]
\newtheorem{corollary}[theorem]{Corollary}
\newtheorem{proposition}[theorem]{Proposition}
\newtheorem*{remark}{Remark}
\newtheorem*{example}{Numerical example}
\renewcommand{\P}{\mathbb{P}}
\newcommand{\E}{\mathbb{E}}
\DeclareMathOperator{\GBi}{GBi}
\DeclareMathOperator{\Be}{Be}
\DeclareMathOperator{\Ga}{Ga}
\DeclareMathOperator{\Bi}{Bi}
\DeclareMathOperator{\supp}{supp}
\DeclareMathOperator{\Var}{Var}
\DeclareMathOperator{\Ber}{B}
\begin{document}

\title{A statistical noise model for a class of Physically Unclonable Functions} 
\author{
  Benjamin Hackl\\
  \small{Alpen-Adria-Universität Klagenfurt}\\
  \small{Department of Mathematics}\\
  \href{mailto:benjamin.hackl@aau.at}{\texttt{benjamin.hackl@aau.at}}
  \and
  Daniel Kurz\\
  \small{Alpen-Adria-Universität Klagenfurt}\\
  \small{Department of Statistics}\\
  \href{mailto:daniel.kurz@aau.at}{\texttt{daniel.kurz@aau.at}}
  \and
  Clemens Heuberger\\
  \small{Alpen-Adria-Universität Klagenfurt}\\
  \small{Department of Mathematics}\\
  \href{mailto:clemens.heuberger@aau.at}{\texttt{clemens.heuberger@aau.at}}
  \and
  Jürgen Pilz\\
  \small{Alpen-Adria-Universität Klagenfurt}\\
  \small{Department of Statistics}\\
  \href{mailto:juergen.pilz@aau.at}{\texttt{juergen.pilz@aau.at}}
  \and
  Martin Deutschmann\\ 
  \small{Technikon Forschungs- und}\\
  \small{Planungsgesellschaft mbH}\\
  \href{mailto:deutschmann@technikon.com}{\texttt{deutschmann@technikon.com}}
}
\date{}
\maketitle

\thispagestyle{empty}

\begin{abstract}
  The interest in ``Physically Unclonable Function''-devices has
  increased rapidly over the last few years, as they have several
  interesting properties for system security related applications
  like, for example, the management of cryptographic
  keys. Unfortunately, the output provided by these devices is noisy
  and needs to be corrected for these applications. 

  Related error correcting mechanisms are typically constructed on the
  basis of an equal error probability for each output bit. This
  assumption does not hold for Physically Unclonable
  Functions, where varying error probabilities can be observed. This
  results in a \emph{generalized binomial distribution} for the number
  of errors in the output. 

  The intention of this paper is to discuss a novel Bayesian
  statistical model for the noise of an especially wide-spread class
  of Physically Unclonable Functions, which properly handles the
  varying output stability and also reflects the different noise
  behaviors observed in a collection of such devices.
   Furthermore, we compare several different methods for
  estimating the model parameters and apply the proposed model to concrete
  measurements obtained within the CODES research project in order to
  evaluate typical correction and stabilization approaches.  
\end{abstract}

\section{Introduction and Preliminaries}

A simple but widely-used transmission model in coding theory is the
Binary Symmetric Channel (BSC). The general assumptions for this
transmission channel (cf.\ \cite{CodingTheory}) are as follows:  
\begin{itemize}
\item bitstrings (i.e.\ words consisting of the symbols $0$ and $1$) are
transmitted,
\item for each symbol the ``flip probability'' (i.e.\ the
probability that $0$ and $1$ is sent, but $1$ and $0$ is received,
respectively) is constant over all transmissions,
\item transmission errors occur independently.
\end{itemize}

Under these assumptions, the number of transmission
errors in an $n$-bit word can be modeled as a binomially distributed
random variable.

However, when investigating Physically Unclonable Functions (PUFs),
these assumptions do not apply. PUFs---and in particular
SRAM-PUFs---generally produce a noisy bitstring (response) for a given
input (challenge) where the flipping probability varies from bit to bit. Thus, a BSC 
is no longer an exact model for the noise of SRAM-PUFs (cf.\
\cite{helperData}). Another approach to model PUF error behavior
(concentrating on considerations regarding entropy) is pursued in
\cite{entropyAnalysis}.   

From a system security theoretic point of view, PUFs are very
interesting because they can be used to construct a challenge-response
mechanism without the need of having a ``master key''. Usually, the
master key is used to derive and verify responses to some given input
challenges. However, as PUFs basically are hardware challenge-response
mechanisms, a master key is not necessary. A general introduction to
the topic of PUFs can be found in \cite{PUFintroduction}.

In our setting, an SRAM-PUF consists of $n\in \mathbb{N}$ uninitialized
SRAM cells (cf.\ \cite{SRAMintroduction}). When
powering on the device, these cells either assume state $0$ or
$1$---and most of them do so in a very stable way. Based on a series
of measurements, we are able to identify a ``stable state'' for each
cell, which, in turn, is then used to compute the \emph{error
  probability} (or \emph{flipping probability}) for each cell. For
example, if a cell takes the state $1$ more often than the state $0$
in our measurements, we assess $1$ as the stable state of this
cell and $0$ is its error state. A query of all $n$ PUF bits is called
a \emph{PUF evaluation}. Otherwise, when investigating only a subset
of all PUF bits, then we speak of \emph{PUF responses}.

From a statistical point of view, the SRAM cells can be modeled by
means of independent Bernoulli distributed random variables $X_{j}
\sim \Ber(p_{j})$ for $j=1,2,\ldots, n$. That is, we have $\P(X_{j} =
1) = p_{j}$ and $\P(X_{j} = 0) = 1 - p_{j}$. The stability of an
SRAM-cell is then related to its error probability, where $p_{j} =
0.5$ means complete instability and $p_{j} = 0$ or $p_{j} = 1$ means
total stability. As we are only interested in the
stability of the cells, and not in the concrete values they assume,
our error probabilities can be bounded from above by $0.5$---this can
be enforced by choosing the stable state of the cell accordingly.

If these random variables are Bernoulli distributed with a
common error probability $p$, it is well known that the random number
of errors $E$ in a response of length $\ell$ follows a binomial
distribution with parameters $\ell$ and $p$, short $E\sim
\operatorname{Bi}(\ell, p)$. However, when investigating PUFs, the
situation is not quite as simple. Not all of the
SRAM cells are equally stable, which is indicated in
Figure~\ref{fig:errorAndWeight}, showing histograms for the error rate
and bit weight\footnote{The bit weight $w_{j}$ 
  is used to determine the stable states. It is computed
  separately for each bit as the proportion of ones among all
  measurements.} for a given set of SRAM-PUF measurements. One can see
that the majority of bits (about $80\%$) have an error rate of less
than $2\%$, while about $2.3\%$ of all SRAM cells show an error
probability greater than $40\%$, indicating strong instability of
these cells. Moreover, from the symmetry in the bit-weight histogram
one can see that the PUF actually is very balanced: there are
approximately the same number of bits assuming stable state $0$ as
there are assuming stable state $1$. 

\begin{figure}[ht]
  \centering
  \begin{minipage}[ht]{0.49\linewidth}
    \includegraphics[width=1\linewidth]{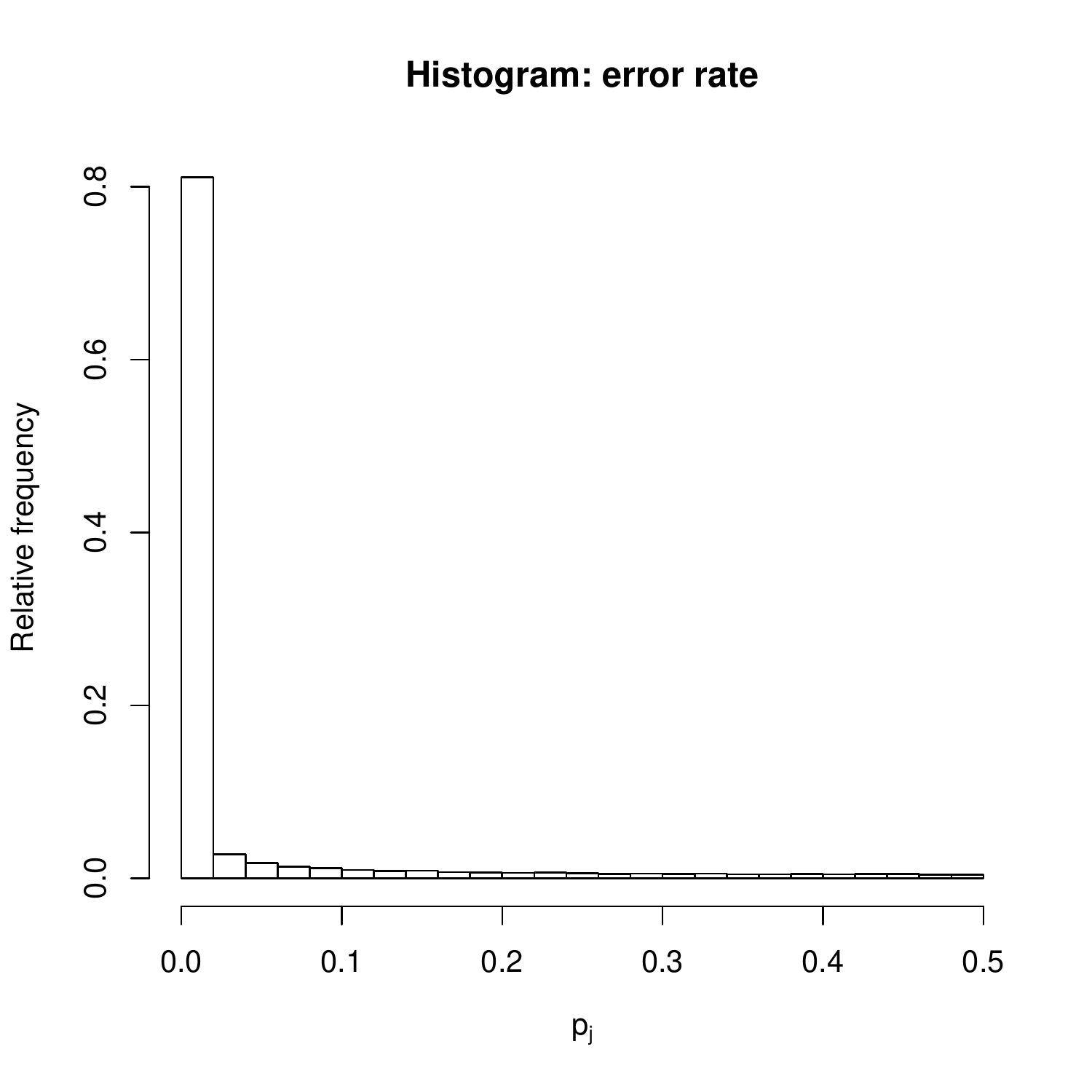}
  \end{minipage}\hfill
  \begin{minipage}[ht]{0.49\linewidth}
    \includegraphics[width=1\linewidth]{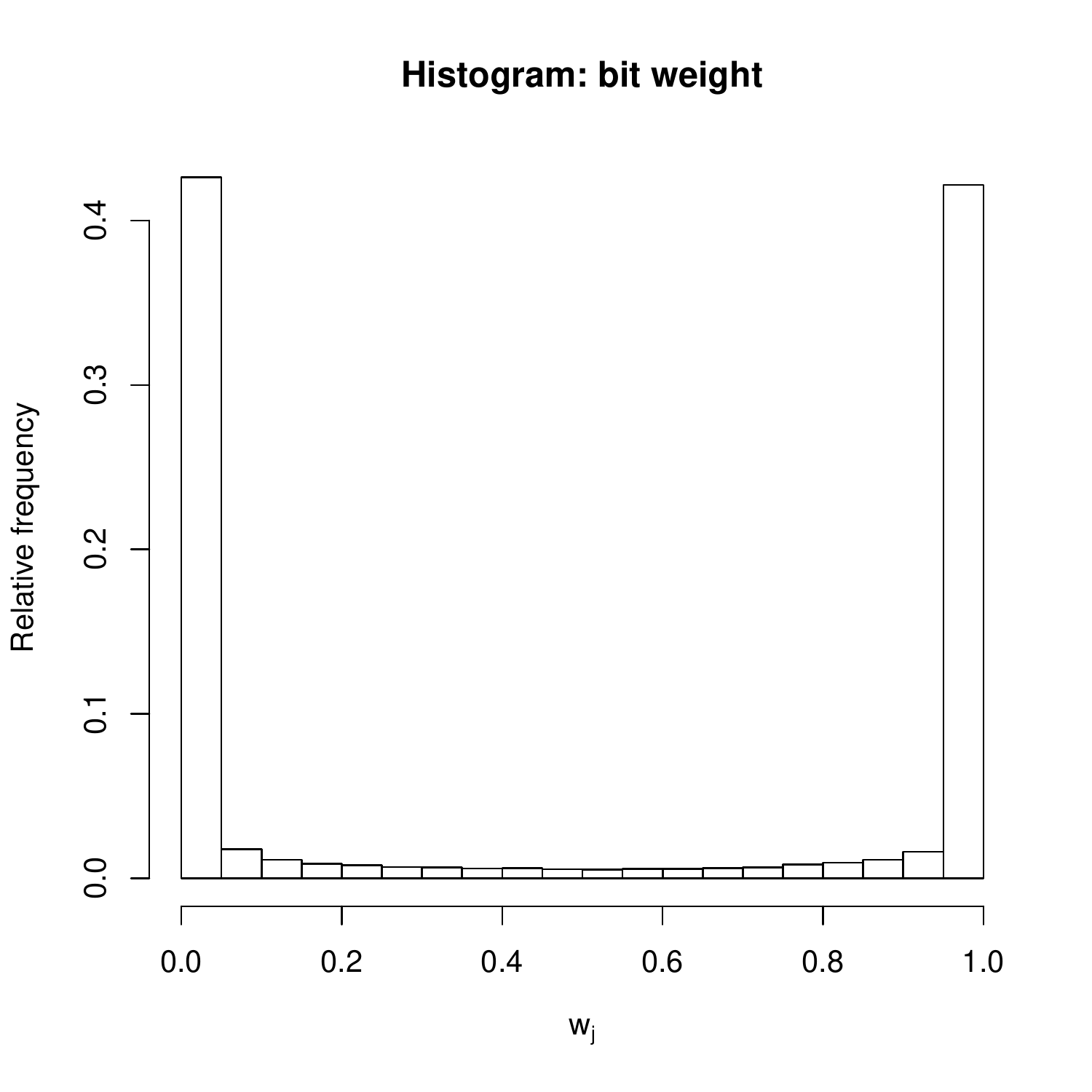}
  \end{minipage}
  \caption{SRAM cell stability: bit weights and error rates.}
  \label{fig:errorAndWeight}
\end{figure}

\begin{figure}[ht]
  \centering
  \includegraphics[width=0.9\linewidth]{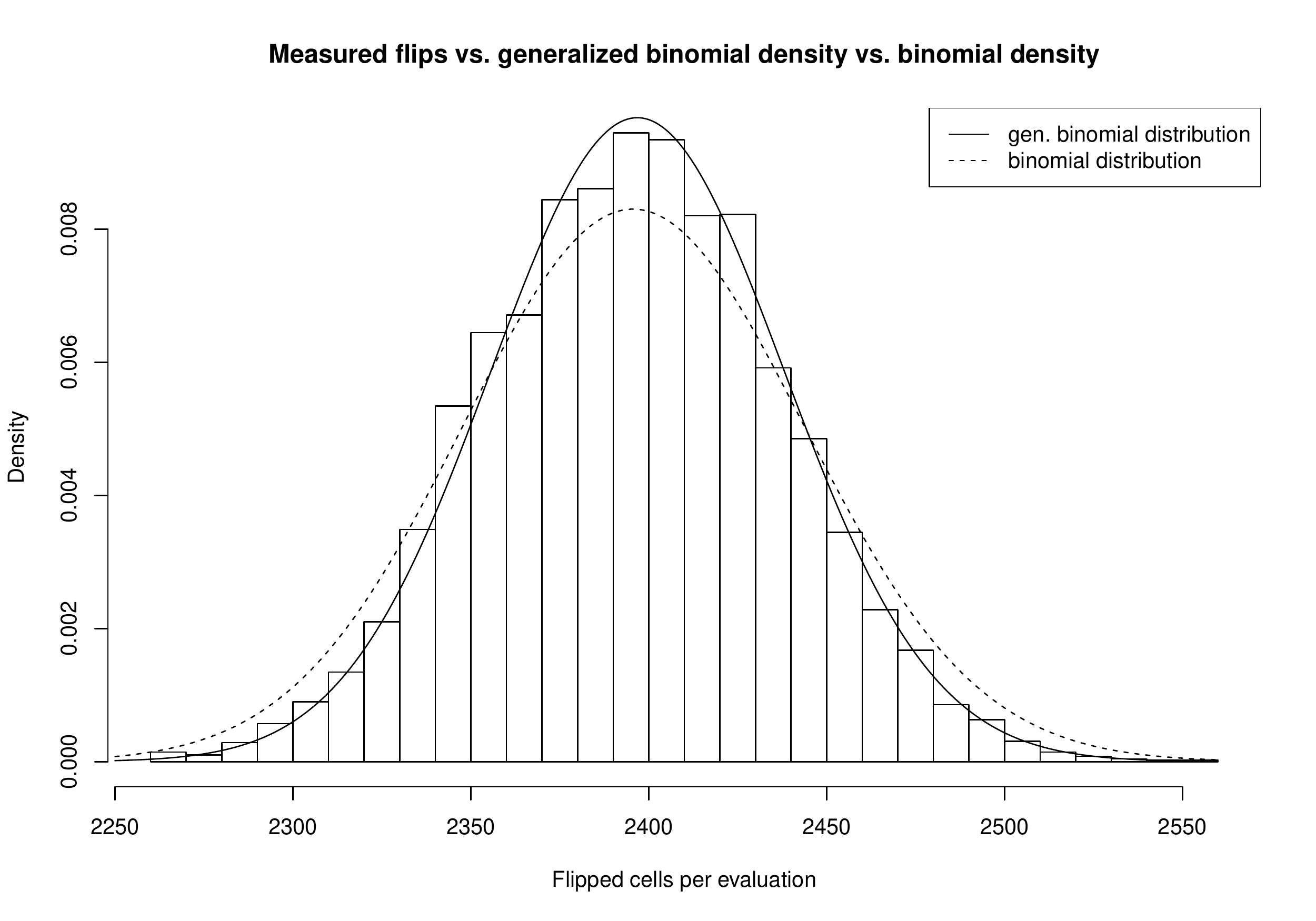}
  \caption{Goodness of fit: binomial vs.\ generalized binomial distribution.}
  \label{fig:densityComparison}
\end{figure}

In the case of varying error probabilities, the random number of
errors no longer follows a classical binomial distribution. 
Instead, we apply a \emph{generalized binomial
  distribution}, which yields a much better fit than the binomial
distribution for the number of flipped cells per PUF evaluation. This
is illustrated in Figure~\ref{fig:densityComparison}, showing a fitted
binomial and generalized binomial distribution for the histogram of
the number of flipped cells in a SRAM-PUF response of size $n = 2^{16}$.

In this paper, we investigate various properties of the
generalized binomial distribution as the proper model for the noise of
SRAM-PUFs (cf.\ Section~\ref{sec:gbin}). Furthermore, we develop a suitable
statistical model for evaluating the overall noise behavior of
SRAM-PUFs on the basis of measurements from several SRAM-devices (cf.\
Section~\ref{sec:model}). An empirical Bayesian approach is employed
to assess the parameters of the underlying generalized binomial
distribution. Several estimation techniques for determining the
hyperparameters in the empirical Bayesian model are compared within a
simulation study in Section~\ref{sec:applications}.

Finally, we extend the proposed model to concrete measurements obtained within
the CODES\footnote{\url{https://www.technikon.com/projects/codes}}
research project and use these concrete measurements to discuss
stabilization methods for SRAM-PUF responses on the basis of error
correction schemes and order statistics.

\section{Some properties of the generalized binomial distribution}\label{sec:gbin}

In this section, we will discuss some properties of the generalized
binomial distribution which arises naturally when investigating the
number of flipped SRAM cells per SRAM-PUF response.

In general, the binomial distribution originates in the context of
Bernoulli trials. In \cite{fisz}, the author calls the trials
under which the ``generalized binomial distribution'' originates, \emph{Poisson
  trials}\footnote{Therefore, the generalized binomial 
  distribution is also often called \emph{Poisson binomial
    distribution}.}. These are $n$ independent trials where the probability
that some event occurs in the $j$-th trial is $p_{j}$. Then, the
number of occurrences of the event within those $n$ trials follows a
generalized binomial distribution with parameter vector $\mathbf{p}
= (p_{1}, \ldots, p_{n})^{\top}$. Thus, we may define this distribution as
follows: 

\begin{definition}[Generalized binomial distribution]\label{def:GBi}
  Let $p_{1}$, $p_{2}$, \ldots, $p_{n}\in [0,1]$. Assume that the
  random variables $X_{j}\sim \Ber(p_{j})$ are independent for
  $j\in \{1,2,\ldots,n\}$. Then, the random variable $X :=
  \sum_{j=1}^{n} X_{j}$ follows a \emph{generalized binomial
    distribution} with parameter vector $\mathbf{p} =
  (p_{1},p_{2},\ldots, p_{n})^{\top}$. For short, we write $X\sim\GBi(p_{1}, p_{2},
  \ldots, p_{n})$.
\end{definition}

There is a number of elementary properties for the generalized
binomial distribution which follow immediately from the definition.

\begin{proposition}[Elementary properties]\label{prop:elementaryProperties}
  Let $X\sim \GBi(p_{1}, p_{2}, \ldots, p_{n})$. Then the following
  statements hold:
  \begin{enumerate}
  \item[(a)] The support of the random variable $X$ is contained in
    $\{0,1,\ldots, n\}$.
  \item[(b)] The probability mass function of $X$ is given by
    \[ \P(X = k) = \sum_{\substack{S\subseteq \{1,2,\ldots, n\} \\ |S| = k}}
    \prod_{s\in S} p_{s}~~\cdot~~ \prod_{\mathclap{s\in \{1,2,\ldots, n\}\setminus S}} (1-p_{s}).  \]
  \item[(c)] Expectation and variance of $X$ are given by
    \[ \E X = \sum_{j=1}^{n} p_{j} \quad \text{ and }\quad \Var(X) =
    \sum_{j=1}^{n} p_{j}\cdot (1-p_{j}),  \]
    respectively.
  \item[(d)] The characteristic function $\varphi_{X}(t)$ of $X$ is
    given by
    \[ \varphi_{X}(t) = \prod_{j=1}^{n} \left(1-p_{j} + p_{j}\cdot
      e^{it}\right).  \]
  \item[(e)] The generalized binomial distribution is a generalization
    of the binomial distribution: for $p_{1} = p_{2} = \cdots = p_{n}
    =:p$, the random variable $X$ follows a binomial distribution with
    parameters $n$ and $p$.
  \end{enumerate}
\end{proposition}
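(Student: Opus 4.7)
The plan is to handle the five items sequentially, since each follows directly from the representation $X = \sum_{j=1}^n X_j$ with independent $X_j \sim \Ber(p_j)$; no single part poses a genuine obstacle, and the main task is simply to organize the bookkeeping cleanly.

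First, for (a), I would observe that each $X_j$ takes values in $\{0,1\}$, so the sum $X$ takes values in $\{0,1,\ldots,n\}$. For (b), the key idea is to partition the event $\{X = k\}$ according to which $k$ of the $n$ Bernoulli variables equal $1$: for every subset $S \subseteq \{1,\ldots,n\}$ with $|S| = k$, the event $\{X_s = 1 \text{ for } s \in S, X_s = 0 \text{ otherwise}\}$ has probability $\prod_{s \in S} p_s \cdot \prod_{s \notin S}(1-p_s)$ by independence, and these events are disjoint, so summing gives the claimed formula.

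For (c), linearity of expectation directly yields $\E X = \sum_j \E X_j = \sum_j p_j$. The variance computation then uses the independence of the $X_j$, which makes $\Var(X) = \sum_j \Var(X_j)$; since $\Var(X_j) = p_j(1-p_j)$ for a Bernoulli variable, the formula follows. For (d), I would use the fact that the characteristic function of a sum of independent random variables is the product of the individual characteristic functions: $\varphi_X(t) = \prod_{j=1}^n \varphi_{X_j}(t)$, and for a Bernoulli variable $\varphi_{X_j}(t) = \E e^{it X_j} = (1-p_j) + p_j e^{it}$.

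Finally, for (e), specializing (b) to $p_1 = \cdots = p_n = p$ collapses the sum over subsets of size $k$ to a factor of $\binom{n}{k}$, yielding $\P(X=k) = \binom{n}{k} p^k (1-p)^{n-k}$, the binomial PMF; alternatively, (d) collapses to $(1-p+pe^{it})^n$, the characteristic function of $\Bi(n,p)$. Either route suffices. The only mildly subtle point throughout is keeping the notational distinction between the index set $\{1,\ldots,n\}$ and its subset $S$ transparent in (b), but there are no genuine technical obstacles here.
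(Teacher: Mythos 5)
Your proposal is correct and follows essentially the same route as the paper: (a)--(d) via the representation $X=\sum_j X_j$ with independence, and (b) by partitioning $\{X=k\}$ over size-$k$ subsets. The only cosmetic difference is in (e), where you collapse the subset sum in (b) to $\binom{n}{k}p^k(1-p)^{n-k}$ (or specialize (d)), while the paper simply invokes the characterization of $\Bi(n,p)$ as a sum of $n$ i.i.d.\ Bernoulli variables---both are immediate.
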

\begin{proof}
  With the notation of Definition~\ref{def:GBi}, (a) follows
  directly as we know $\supp (X_{j}) \subseteq \{0,1\}$ for all $j\in
  \{1,2,\ldots, n\}$. Therefore, the support of the sum is contained
  in $\{0,1,\ldots, n\}$.\\
  Based on the independence of $X_{j}$, the
  described probability mass function as stated in (b) follows
  immediately from
  \begin{align*} 
    \P(X = k) & = \sum_{\substack{S\subseteq \{1,2,\ldots, n\} \\ |S| = k}}
  \P\left(X_{s} = 1 \text{ for } s\in S \text{ and } X_{s} = 0 \text{
      for } s\in \{1,2,\ldots, n\}\setminus S\right) \\
  & = \sum_{\substack{S\subseteq \{1,2,\ldots, n\} \\ |S| = k}} \prod_{s\in S}
  p_{s} ~~\cdot~~ \prod_{\mathclap{s\in \{1,2,\ldots, n\}\setminus S}} (1-p_{s}).
  \end{align*}
  The remaining statements follow from the definition of $X$ as
  the sum of independently distributed random variables. We find
  \[ \E X = \E \left(\sum_{j=1}^{n} X_{j}\right) = \sum_{j=1}^{n} \E
  X_{j} = \sum_{j=1}^{n} p_{j},  \]
  by using the linearity of $\E$. We obtain $\Var(X_{j}) = p_{j}
  (1-p_{j})$ by using the linearity of the variance for independent
  random variables. Statement (d) is proved as
  the characteristic function of $X_{j}$ is $\varphi_{X_{j}}(t) =
  1-p_{j}+p_{j} e^{it}$ and because
  \[ \varphi_{X_{1}+X_{2}+\cdots +X_{n}}(t) = \prod_{j=1}^{n}
  \varphi_{X_{j}}(t)  \]
  holds for independent random variables $X_{1}$, \ldots,
  $X_{n}$. Finally, as the probabilities in (e) are all equal to $p$,
  $X$ is the sum of $n$ independent and identically distributed
  Bernoulli random variables---which is an alternative definition for
  the binomial distribution with parameters $n$ and $p$. This
  completes the proof.
\end{proof}

The \texttt{R} package ``\texttt{GenBinomApps}''
(cf.\ \cite{genBinomAppsPkg}) offers an
efficient implementation to compute the probability mass function
recursively. The theoretic background for this recursive computation
approach is covered in \cite{genBinomKurz}.

However, for very large dimensions of the probability vector, the
computation of the distribution function is quite expensive. In such
cases, approximation with a binomial distribution would be
desirable. In the following proposition we give some useful properties
of such an approximation.

\begin{proposition}[Binomial approximation]\label{prop:approx}
  Let $X\sim \GBi(p_{1},p_{2},\ldots, p_{n})$ and $Y\sim
  \operatorname{Bi}(n, p^{*})$. Then the following properties hold:
  \begin{enumerate}
  \item[(a)] The expectation of $X$ is equal to the expectation of $Y$
    if and only if $p^{*} = \overline{p} := \frac{1}{n} \sum_{j}
    p_{j}$.
  \item[(b)] If we have $p^{*} = \overline{p}$, then the inequality $\Var(X) \leq
    \Var(Y)$ holds for arbitrary parameters $p_{1}$, $p_{2}$, \ldots, $p_{n}
    \in [0,1]$ of $X$ and equality holds if and only if $p_{1} = p_{2}
    = \cdots = p_{n}$.
  \end{enumerate}
\end{proposition}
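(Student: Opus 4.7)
The plan is to rely entirely on the elementary moment formulas established in Proposition~\ref{prop:elementaryProperties}(c), so essentially no new probabilistic machinery is needed.

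For part (a), I would simply observe that $\E X = \sum_{j=1}^{n} p_{j}$ by Proposition~\ref{prop:elementaryProperties}(c), while $\E Y = n p^{*}$ by the classical formula for the binomial. Setting these equal and dividing by $n$ immediately yields $p^{*} = \tfrac{1}{n}\sum_{j} p_{j} = \overline{p}$, and the converse is clear by substitution. This is a one-line verification.

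For part (b), the key is to rewrite both variances so the comparison reduces to a clean inequality about the $p_{j}$. Using Proposition~\ref{prop:elementaryProperties}(c), I would compute
\[
  \Var(X) = \sum_{j=1}^{n} p_{j}(1-p_{j}) = \sum_{j=1}^{n} p_{j} - \sum_{j=1}^{n} p_{j}^{2},
\]
and, since $p^{*}=\overline{p}$,
\[
  \Var(Y) = n\overline{p}(1-\overline{p}) = \sum_{j=1}^{n} p_{j} - n\overline{p}^{\,2}.
\]
Subtracting gives $\Var(Y) - \Var(X) = \sum_{j=1}^{n} p_{j}^{2} - n\overline{p}^{\,2}$, so the claim reduces to showing this quantity is nonnegative, with equality exactly when the $p_{j}$ are all equal.

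The main (and only mildly nontrivial) step is this final inequality, but it is a standard consequence of either Cauchy--Schwarz applied to the vectors $(p_{1},\ldots,p_{n})$ and $(1,\ldots,1)$, or equivalently Jensen's inequality for the convex function $x \mapsto x^{2}$, which gives $\tfrac{1}{n}\sum_{j} p_{j}^{2} \geq \bigl(\tfrac{1}{n}\sum_{j} p_{j}\bigr)^{2} = \overline{p}^{\,2}$. Equivalently, $\sum_{j} p_{j}^{2} - n\overline{p}^{\,2} = \sum_{j}(p_{j}-\overline{p})^{2} \geq 0$, which makes the equality case transparent: it holds iff $p_{j} = \overline{p}$ for every $j$, i.e.\ all $p_{j}$ coincide. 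I do not anticipate any genuine obstacle; the only care needed is to present the equality characterization cleanly, which the sum-of-squares rewriting handles in a single line.
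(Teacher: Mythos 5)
Your proof is correct and follows essentially the same route as the paper: part (a) is the same one-line comparison of means, and part (b) reduces, by the same algebra, to the inequality $\sum_{j} p_{j}^{2} \geq n\overline{p}^{\,2}$, which the paper closes via Cauchy--Schwarz (with linear dependence giving the equality case) and you close via the identity $\sum_{j} p_{j}^{2} - n\overline{p}^{\,2} = \sum_{j}(p_{j}-\overline{p})^{2}$. Your sum-of-squares rewriting is a slightly more elementary finish that makes the equality characterization immediate, but the argument is otherwise the same.
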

\begin{proof}
  The first statement follows immediately from
  Proposition~\ref{prop:elementaryProperties} and because of simple
  properties of the binomial distribution. We have $\E X =
  \sum_{j=1}^{n} p_{j}$ and $\E Y = n\cdot p^{*}$. Therefore, the
  relation
  \[ \E X = \E Y \quad \iff \quad p^{*} = \frac{1}{n} \sum_{j=1}^{n}
  p_{j}  \]
  follows immediately. The inequality from (b) can be proven by
  showing its equivalency to the Cauchy-Schwarz inequality. Let
  $\mathds{1}$ denote the $n$-dimensional vector of ones and
  $\mathbf{p} = (p_{1},p_{2},\ldots, p_{n})^{\top}$. Then we obtain
  \begin{align*}
    \Var(X) \leq \Var(Y) &\iff \sum_{j=1}^{n} p_{j}\cdot (1-p_{j})
    \leq n \cdot\overline{p}\cdot (1-\overline{p})\\ & \iff \sum_{j}
    p_{j}\cdot (1-p_{j}) \leq \bigg(\sum_{j} p_{j}\bigg)\cdot
    \bigg(1 - \frac{1}{n} \sum_{j} p_{j}\bigg)\\
    & \iff \sum_{j} p_{j} - \sum_{j} p_{j}^{2} \leq \sum_{j} p_{j} -
    \frac{1}{n} \bigg(\sum_{j} p_{j}\bigg)^{2} \\
    & \iff \bigg(\sum_{j} p_{j}\bigg)^{2} \leq n\cdot \sum_{j}
    p_{j}^{2}\\
    & \iff \left|\langle \mathds{1}, \mathbf{p}\rangle\right|^{2} \leq \langle
    \mathds{1},\mathds{1}\rangle \cdot \langle \mathbf{p}, \mathbf{p}\rangle,
  \end{align*}
  which is the Cauchy-Schwarz inequality for the
  Euclidean scalar product. Finally, equality holds in the
  Cauchy-Schwarz inequality if and only if the vectors $\mathds{1}$
  and $\mathbf{p}$ are linearly dependent, that is $\mathbf{p} =
  p\cdot \mathds{1}$. In this case, $p_{1} = p_{2} = \cdots = p_{n}  =
  p$ follows, which is equivalent to the fact that $X$ is binomially
  distributed with parameters $n$ and $p$.
\end{proof}

By property (b) of Proposition~\ref{prop:approx}, the variance of the
generalized binomial distribution equals the variance of the
approximating binomial distribution if and only if the distributions
coincide. In the following example, we investigate whether
the variance may be used to assess the goodness of fit for such an
approximation.

\begin{example}
  After performing some simulations (in which the input parameters of
  the generalized binomial distribution were generated from various beta
  distributions), we plotted the difference between the variances
  against the maximum error between the corresponding distribution
  functions. The result of one of these simulation batches ($N= 1000$
  generalized binomial distributions with $n=100$
  $\Be(1.5,1.8)$-distributed parameters each in the left plot, and
  following a $\Be(0.3,0.1)$-distribution in the right plot) is
  illustrated in Figure~\ref{fig:binomApprox}. The plots depict the
  relation between the difference of the variances and the maximum
  approximation error of the respective distribution functions
  mentioned above. 

  It is interesting to see that there is a very strong correlation
  (with correlation coefficient greater than $0.95$) between
  the difference of the variances and the maximal approximation error
  in our simulations when the $p_{j}$ concentrate around a single
  value. In this case, the variance difference also is significantly
  lower (as can also be seen in Figure~\ref{fig:binomApprox}).
\end{example}

\begin{figure}[ht]
  \centering
  \begin{minipage}[ht]{0.49\linewidth}
    \includegraphics[width=1\linewidth]{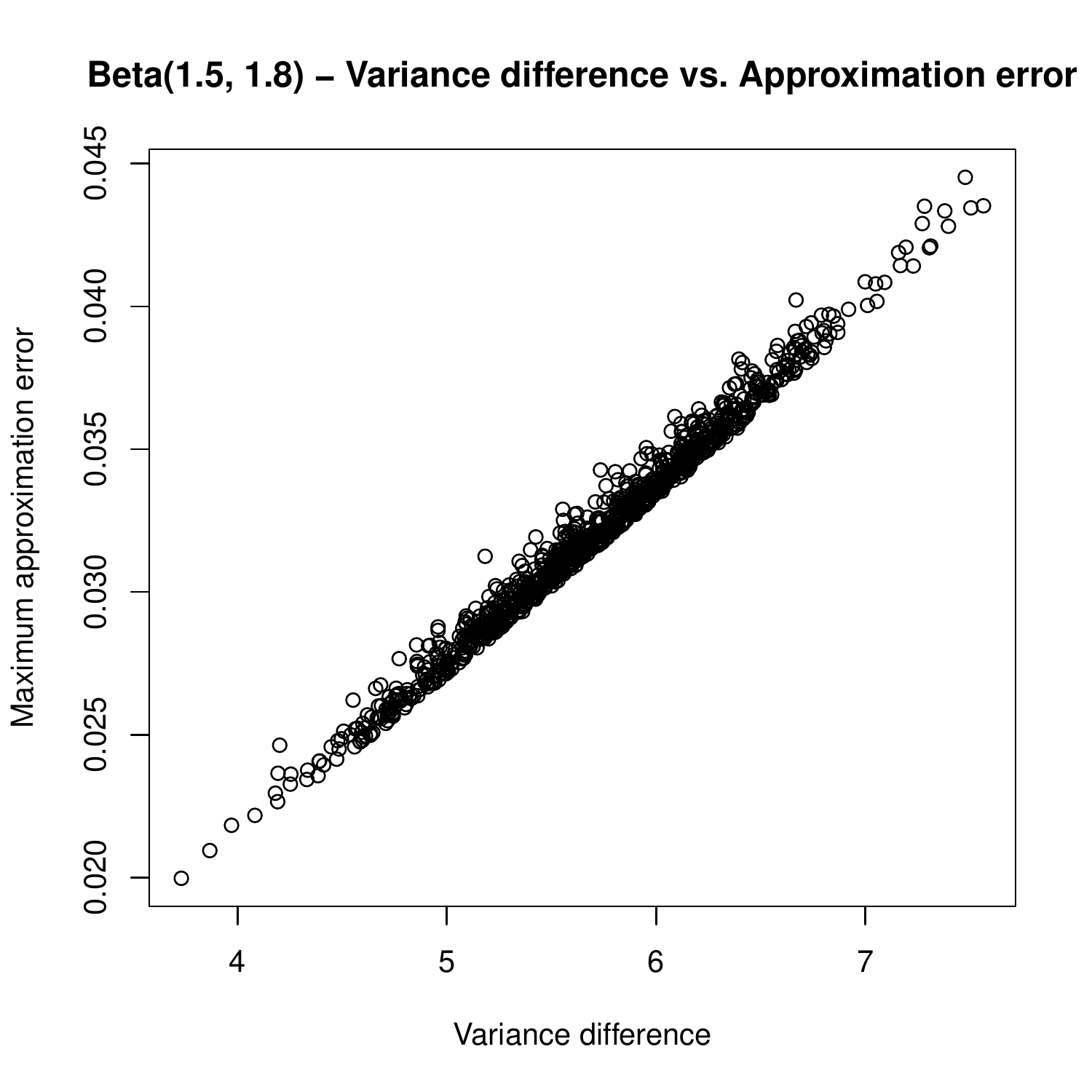}
  \end{minipage}
  \begin{minipage}[ht]{0.49\linewidth}
    \includegraphics[width=1\linewidth]{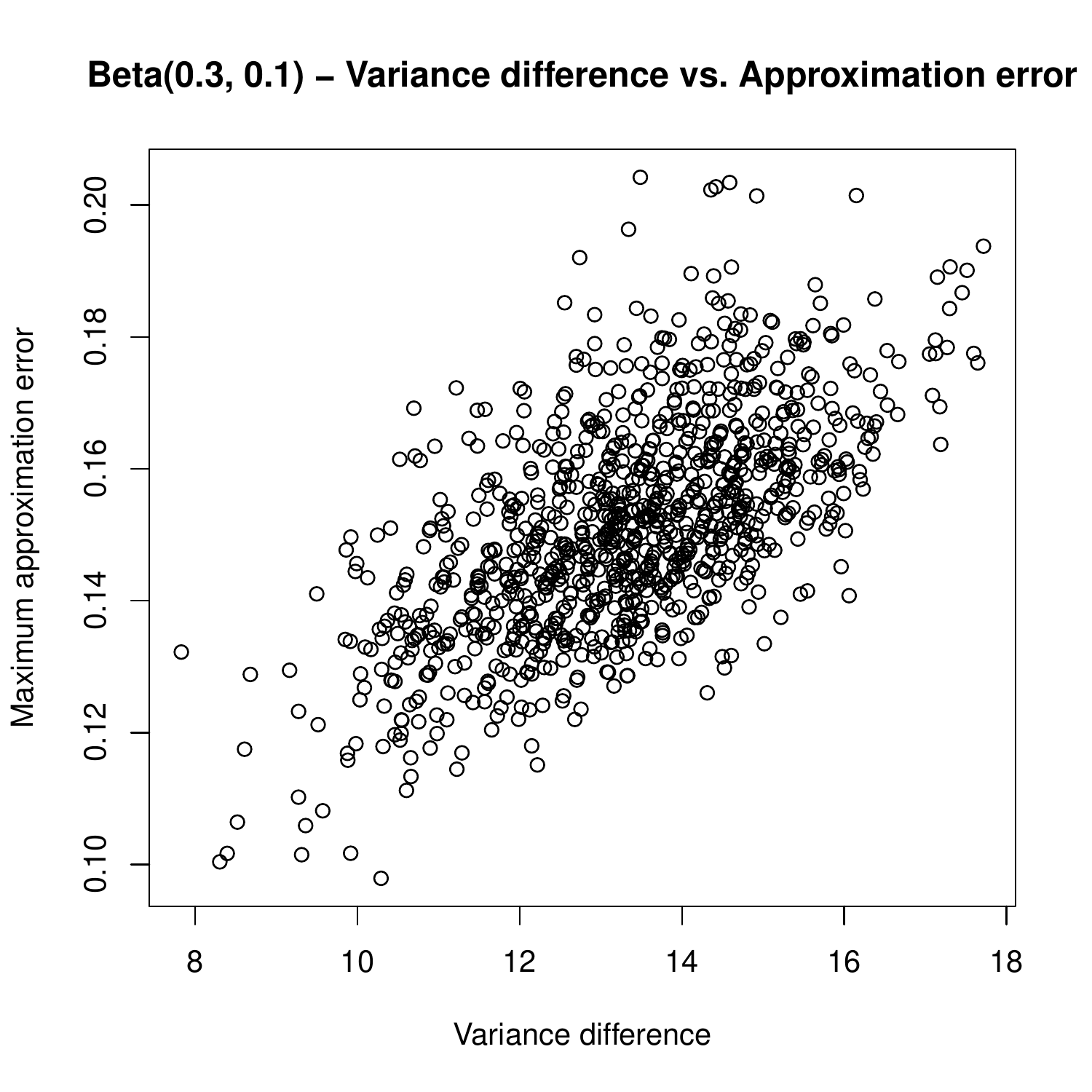}
  \end{minipage}
  \caption{Binomial approximation of the generalized binomial
    distribution: variance vs.\ maximum approximation error.}
  \label{fig:binomApprox}
\end{figure}

The following considerations require the notion of \emph{compound
  distributions}.

\begin{definition}[Compound distribution]\label{def:compound}
  Let $\mathbf{X}$ be a random vector with probability density
  function $p_{\mathbf{X}}(\mathbf{x}\,|\,\boldsymbol{\theta})$ depending on
  some random vector $\boldsymbol{\theta}$. Furthermore, let
  $G_{\boldsymbol{\theta}}(\boldsymbol{\vartheta})$ be the
  distribution function of $\boldsymbol{\theta}$. Then the
  probability density of the compound distribution of $\mathbf{X}$ with
  respect to $G$ is given by 
  \[ p_{\mathbf{X}}(\mathbf{x}) = \int_{\boldsymbol{\vartheta}}
  p_{\mathbf{X}}(\mathbf{x}\mid
  \boldsymbol{\vartheta})~dG_{\boldsymbol{\theta}}(\boldsymbol{\vartheta}).  \] 
\end{definition}

Now, let us assume that the error probabilities $p_{j}$ are realizations of
some random variable $\pi$ with distribution function $F_{\pi}$ and $\supp(
\pi) \subseteq [0,1]$. A quite interesting question is, how the compound
distribution of the generalized binomial distribution with respect 
to the parameters of the distribution of $\pi$ looks like. The following
theorem characterizes these compound distributions with respect to the
generalized binomial distribution.

\begin{theorem}[Compound distribution for the generalized binomial
  distribution]\label{thm:compound}
  Let $\pi_{1}$, $\pi_{2}$, \ldots, $\pi_{n}$ be independently and
  identically distributed random variables on $[0,1]$ with
  distribution function $F_{\pi|\boldsymbol{\vartheta}}$. Then the
  compound distribution $\GBi(\pi_{1},\pi_{2},\ldots, \pi_{n})$ under
  the parameter vector $\boldsymbol{\vartheta}$ is the binomial
  distribution $\Bi(n, \E(\pi\mid \boldsymbol{\vartheta}))$.
\end{theorem}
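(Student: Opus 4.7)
The plan is to use the characteristic function approach: since the characteristic function determines the distribution uniquely, it suffices to show that the compound distribution has the same characteristic function as $\Bi(n, \E(\pi\mid\boldsymbol{\vartheta}))$.

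First I would invoke Proposition~\ref{prop:elementaryProperties}(d) to write the conditional characteristic function of $X\sim\GBi(\pi_1,\ldots,\pi_n)$ given $\boldsymbol{\pi}=(\pi_1,\ldots,\pi_n)$ as
\[
\varphi_{X\mid\boldsymbol{\pi}}(t) \;=\; \prod_{j=1}^{n}\bigl(1-\pi_j+\pi_j e^{it}\bigr).
\]
By Definition~\ref{def:compound}, the characteristic function of the compound distribution is obtained by integrating this against the (conditional) joint distribution of $(\pi_1,\ldots,\pi_n)$ given $\boldsymbol{\vartheta}$, so
\[
\varphi_X(t\mid\boldsymbol{\vartheta}) \;=\; \E\!\left[\,\prod_{j=1}^{n}\bigl(1-\pi_j+\pi_j e^{it}\bigr)\,\Big|\,\boldsymbol{\vartheta}\right].
\]
Since all factors are bounded by $1$ in modulus, no integrability issues arise when interchanging integrals.

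Next, because the $\pi_j$ are independent conditionally on $\boldsymbol{\vartheta}$, the expectation of the product factorises into the product of expectations, and the identical distribution of the $\pi_j$ then makes all factors equal. Setting $\mu:=\E(\pi\mid\boldsymbol{\vartheta})$, this yields
\[
\varphi_X(t\mid\boldsymbol{\vartheta}) \;=\; \prod_{j=1}^{n}\bigl(1-\E(\pi_j\mid\boldsymbol{\vartheta})+\E(\pi_j\mid\boldsymbol{\vartheta})e^{it}\bigr) \;=\; \bigl(1-\mu+\mu e^{it}\bigr)^{n},
\]
which is precisely the characteristic function of $\Bi(n,\mu)$. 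Uniqueness of characteristic functions then finishes the argument.

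There is no real obstacle; the only point to be careful about is that the i.i.d.\ assumption on the $\pi_j$ is to be read conditionally on $\boldsymbol{\vartheta}$, so that the expectation of the product genuinely splits. A completely equivalent, slightly more pedestrian alternative would be to start from the pmf in Proposition~\ref{prop:elementaryProperties}(b), pull the (finite) sum over subsets $S\subseteq\{1,\ldots,n\}$ out of the expectation, and use conditional independence to obtain $\P(X=k\mid\boldsymbol{\vartheta})=\binom{n}{k}\mu^{k}(1-\mu)^{n-k}$ directly; the characteristic-function route is preferable because it avoids the combinatorial bookkeeping.
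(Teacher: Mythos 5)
Your proof is correct, but it takes a genuinely different route from the paper's. The paper works directly with the probability mass function from Proposition~\ref{prop:elementaryProperties}(b): it writes $p_{X}(k\mid\boldsymbol{\vartheta})$ as the sum over subsets $S$ of size $k$, uses the conditional independence of the $\pi_{j}$ to replace each factor $\pi_{s}$ by $\int_{0}^{1} p_{s}\,dF_{\pi\mid\boldsymbol{\vartheta}} = \E(\pi\mid\boldsymbol{\vartheta})$, and then collapses the sum to $\binom{n}{k}\E(\pi\mid\boldsymbol{\vartheta})^{k}(1-\E(\pi\mid\boldsymbol{\vartheta}))^{n-k}$ --- exactly the ``more pedestrian alternative'' you sketch in your last sentence. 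Your characteristic-function argument instead exploits Proposition~\ref{prop:elementaryProperties}(d): since $1-\pi_{j}+\pi_{j}e^{it}$ is affine in $\pi_{j}$ and the $\pi_{j}$ are conditionally i.i.d., the expectation of the product factorises into $\bigl(1-\mu+\mu e^{it}\bigr)^{n}$ with $\mu=\E(\pi\mid\boldsymbol{\vartheta})$, and uniqueness of characteristic functions closes the argument. Both hinge on the same structural fact --- the relevant functional is multi-affine in the $\pi_{j}$, so conditional expectations can be pushed inside factor by factor --- and both require the i.i.d.\ assumption to be read conditionally on $\boldsymbol{\vartheta}$, which you correctly flag. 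Your route avoids the combinatorial bookkeeping and yields the corollary on the characteristic function of $X$ as an immediate by-product; the paper's route has the advantage of producing the binomial pmf explicitly, which is what is actually used later for the error-correction computations. Either proof is acceptable.
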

\begin{proof}
  It is easy to see that the expected value $\E(\pi\mid
  \boldsymbol{\vartheta})$ always exists and is contained in $[0,1]$,
  as the support of $\pi$ itself is contained in $[0,1]$. The expected
  value therefore is a valid second parameter for the binomial
  distribution.

  Now, let $X\mid \pi_{1},\ldots, \pi_{n}\sim \GBi(\pi_{1},\ldots,
  \pi_{n})$ and $\pi_{i}\stackrel{\text{iid}}{\sim} F_{\pi\mid
    \boldsymbol{\vartheta}}$. According to the definition of 
  compound distributions, the probability mass
  function $p_{X}(k\mid \boldsymbol{\vartheta})$ is then determined as
  follows:
  \begin{align*}
    p_{X}(k\mid \boldsymbol{\vartheta}) & = \sum_{\substack{S\subseteq
        \{1,\ldots, n\}\\ |S|=k}} \prod_{s\in S} 
    \left( \int_{p_{s}=0}^{1} p_{s}~dF_{\pi\mid \boldsymbol{\vartheta}}
    \right)\cdot \prod_{s\not\in S} \left( 1 - \int_{p_{s} = 0}^{1}
      p_{s}~dF_{\pi\mid \boldsymbol{\vartheta}} \right)\\
    & = \sum_{\substack{S\subseteq \{1,\ldots, n\}\\ |S|=k}} \prod_{s\in S}
    \E(\pi\mid \boldsymbol{\vartheta})\cdot \prod_{s\not\in S}
    (1-\E(\pi\mid \boldsymbol{\vartheta}))  = \binom{n}{k} \E(\pi\mid
    \boldsymbol{\vartheta})^{k}\cdot (1- \E(\pi\mid
    \boldsymbol{\vartheta}))^{n-k}.
  \end{align*}
  This is exactly the probability mass function of the $\Bi(n,
  \E(\pi\mid \boldsymbol{\vartheta}))$-distribution and therefore, proves
  the theorem.
\end{proof}

\begin{remark}
  \begin{enumerate}
  \item[(i)]  This result is in accordance with the result stated in property (b) of
    Proposition~\ref{prop:approx}: by forming the compound distribution,
    the variance also increases in general. 
  \item[(ii)] In a Bayesian context, the integrand in
    Definition~\ref{def:compound} can be identified with the
    likelihood (parameterized by $\boldsymbol{\vartheta}$), the
    integrating function $G_{\boldsymbol{\theta}}$ with the prior
    distribution of $\boldsymbol{\theta}$ and the left-hand side
    $p_{\mathbf{X}}(\mathbf{x})$ portrays the marginal distribution of
    $\mathbf{X}$, also known as prior predictive distribution. The
    increase in the variance of the compound distribution then
    accounts for the parameter uncertainty in a natural way. 
  \end{enumerate}
\end{remark}

\begin{corollary}[Properties of the compound distribution]
  Let $X\sim \GBi(\pi_{1},\pi_{2},\ldots, \pi_{n})$ with the 
  identically and independently distributed random variables
  $\pi_{1}$, $\pi_{2}$, \ldots, $\pi_{n}$ on $[0,1]$ with distribution
  function $F_{\pi|\boldsymbol{\vartheta}}$. Then the expected value
  and variance of $X$ are given by
  \[ \E X = n\cdot \E(\pi\mid \boldsymbol{\vartheta})\quad \text{ and
  }\quad \Var(X) = n\cdot \E(\pi\mid \boldsymbol{\vartheta})\cdot
  (1-\E(\pi\mid \boldsymbol{\vartheta})).  \]
  Furthermore, the characteristic function of $X$ has the form
  \[ \varphi_{X}(t) = \left[1-\E(\pi\mid \boldsymbol{\vartheta}) +
    \E(\pi\mid \boldsymbol{\vartheta})\cdot e^{it}\right]^{n}.  \]
\end{corollary}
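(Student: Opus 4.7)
The plan is to observe that the corollary is essentially a direct reading of Theorem~\ref{thm:compound}. That theorem already identifies the compound distribution as $\Bi(n, \E(\pi\mid\boldsymbol{\vartheta}))$, so I would simply quote it and then appeal to the three standard formulas for the binomial distribution: mean $np$, variance $np(1-p)$, and characteristic function $(1-p+pe^{it})^n$, instantiated with $p = \E(\pi\mid\boldsymbol{\vartheta})$. This immediately yields all three claims in the corollary.

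As a sanity check (and to make the derivation self-contained if desired), I would include a brief direct verification via conditioning, using only Proposition~\ref{prop:elementaryProperties}. For the mean, the tower property gives
\[ \E X = \E\bigl[\E(X\mid \pi_1,\ldots,\pi_n)\bigr] = \E\Bigl[\sum_{j=1}^n \pi_j\Bigr] = n\,\E(\pi\mid\boldsymbol{\vartheta}).\]
For the variance, the law of total variance combined with Proposition~\ref{prop:elementaryProperties}(c) and the independence of the $\pi_j$ gives
\[ \Var(X) = \E\Bigl[\sum_j \pi_j(1-\pi_j)\Bigr] + \Var\Bigl(\sum_j \pi_j\Bigr), \]
and after expanding this collapses—via $\E\pi^2$ cancelling between the two summands—to $n\,\E(\pi\mid\boldsymbol{\vartheta})(1-\E(\pi\mid\boldsymbol{\vartheta}))$. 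For the characteristic function, conditioning on $\pi_1,\ldots,\pi_n$ and using Proposition~\ref{prop:elementaryProperties}(d) together with the iid assumption yields
\[ \varphi_X(t) = \E\Bigl[\prod_{j=1}^n (1-\pi_j+\pi_j e^{it})\Bigr] = \bigl(1-\E(\pi\mid\boldsymbol{\vartheta})+\E(\pi\mid\boldsymbol{\vartheta})e^{it}\bigr)^n. \]

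There is no real obstacle here; the substance of the argument lies in Theorem~\ref{thm:compound}, and the only point requiring mild care is swapping expectation and product in the characteristic-function computation, which is justified by the independence of the $\pi_j$. I would therefore keep the write-up short, present the theorem-based derivation as the main argument, and optionally add the conditioning computation as a remark for readers who prefer a direct derivation.
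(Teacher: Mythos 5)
Your main argument is exactly the paper's: the corollary is read off from Theorem~\ref{thm:compound} together with the standard formulas for the mean, variance, and characteristic function of the binomial distribution. The supplementary direct verification via the tower property, the law of total variance, and independence of the $\pi_{j}$ is also correct, but it is an optional addition rather than a different route.
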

\begin{proof}
  These statements follow immediately from Theorem~\ref{thm:compound}
  and some elementary properties of the binomial distribution.
\end{proof}

We will use these results in Section~\ref{sec:applications} in order to
estimate the probability that the SRAM-PUF noise cannot be
``corrected'' properly in a response of given length.

\section{Statistical Model for overall SRAM-PUF noise behavior}\label{sec:model}

In this section, we discuss a suitable
statistical model for the noise behavior of a set of different
SRAM-PUF devices, where the model parameters can be estimated from a
series of simple PUF evaluations.  The model is based on an
exploratory statistical analysis carried out within the CODES research
project. Furthermore, we propose several methods for assessing the
model parameters.

\subsection{Model development}\label{sec:modelDevel}

We propose a Bayesian model for the noise behavior of SRAM-PUFs:
assume that we have $m_{\operatorname{dev}}$ SRAM-PUF devices, where
device $i$ has $c_{i}$ SRAM-cells for $i=1,\ldots,
m_{\operatorname{dev}}$. Then, we model the noise of the $i$-th device
as a vector of $c_{i}$ independently distributed random variables
$X_{ij}\sim \Ber(p_{ij})$ such that 
\begin{align*}
  \P(\text{Cell } j \text{ in device } i \text{ flips}) & := \P(X_{ij}
= 1) = p_{ij}.
\end{align*}

Motivated by the results of an exploratory statistical analysis, we
further model these error probabilities to follow a scaled beta
distribution on the interval $[0,\sfrac{1}{2}]$. 

\begin{definition}[Scaled beta distribution]
  Let $a$ and $b$ be real numbers and $a < b$. If the random variable
  $P$ follows a beta distribution with parameters $\alpha$ and
  $\beta$, $P\sim \Be(\alpha, \beta)$, the random variable $Q = a +
  (b-a)\cdot P$ follows a scaled beta distribution on the interval
  $[a,b]$ with parameters $\alpha$ and $\beta$. For short, we write $Q \sim
  \Be_{[a,b]}(\alpha, \beta)$. 
\end{definition}

For the $i$-th device, we parametrize the beta distribution of the
parameters $(p_{ij})_{j=1}^{c_{i}}$ as
$\Be_{[0,\sfrac{1}{2}]}(2\delta_{i}\cdot K_{i}, (1-2\delta_{i})\cdot
K_{i})$, such that $\delta_{i}$ denotes the distribution's expected value and $K_{i}$ is a
shape parameter controlling the variance. In the next step, we assign
prior distributions to $\delta_{i}$ and $K_{i}$. More precisely, we
model $(\delta_{i})_{i=1}^{m_{\operatorname{dev}}}$ to
follow a scaled beta distribution (again scaled to the interval $[0,
\sfrac{1}{2}]$) with parameters $\alpha$ and $\beta$, and the shape
parameters $(K_{i})_{i=1}^{m_{\operatorname{dev}}}$ to follow a gamma
distribution with parameters $\kappa$ and $\lambda$. Altogether, we 
have
\begin{align*}
  X_{ij}\,|\, p_{ij} & \sim \Ber(p_{ij}) &  \text{for }i = 1,2,\ldots,
  m_{\operatorname{dev}}, &\quad j= 1,2,\ldots, c_{i},\\
  (p_{ij}\,|\,\delta_{i}, K_{i})_{j=1}^{c_{i}} & \sim \Be_{[0,\sfrac{1}{2}]}(2\delta_{i}\cdot K_{i}, (1-2\delta_{i})\cdot
  K_{i}) & \text{for } i=1,2,\ldots, m_{\operatorname{dev}},\\
  (\delta_{i}\,|\,\alpha,\beta)_{i=1}^{m_{\operatorname{dev}}} & \sim
  \Be_{[0,\sfrac{1}{2}]}(\alpha, \beta),\\
  (K_{i}\,|\,\kappa,\lambda)_{i=1}^{m_{\operatorname{dev}}} &\sim \Ga(\kappa, \lambda).
\end{align*}

\begin{remark}
  We choose this model over a simplified model without assumed
  distributions for the parameters $\delta_{i}$ and $K_{i}$ primarily
  because of the control we have over the mean error rate, as well as
  to reflect that different devices may have varying mean error rates.  
  Within this model, a variety of situations related to the SRAM-PUF
  production can be modeled and simulated.
\end{remark}

\subsection{Parameter estimation}\label{sec:paramEst}

Assuming we have $m_{\operatorname{dev}}$ SRAM-devices with $c_{i}$
SRAM-cells in the $i$-th device, the result of a series of $m_{i}$ measurements of
device $i$ is a vector $\mathbf{x}_{i}=(x_{i1},\ldots, x_{ic_{i}})^{\top}$,
where the component $x_{ij}$ denotes the number of measured error
states for the $j$-th cell of device $i$ and is $\Bi(m_{i},
p_{ij})$-distributed. Starting from these measurements, we wish to
estimate the parameters $\alpha$ and $\beta$ of the scaled beta
distribution $\Be_{[0,\sfrac{1}{2}]}(\alpha, \beta)$ modeling the mean
error rates $\delta_{i}$, and the parameters $\kappa$ and $\lambda$ of
the gamma distribution $\Ga(\kappa, \lambda)$ modeling the
distribution of the shape parameters $K_{i}$. Note that the component
$x_{ij}$ is a realization of a $\Bi(m_{i}, p_{ij})$ distribution.

\begin{remark}
  The joint posterior density function for our model is of the form
  \begin{multline*}
     f(\mathbf{p}, \boldsymbol{\delta}, \mathbf{K}, \alpha, \beta,
     \kappa, \lambda \mid \mathbf{x}) \propto \Bigg(\prod_{i=1}^{m_{\operatorname{dev}}}\Bigg(
  \prod_{j=1}^{c_{i}} f_{\Bi}(x_{ij}\mid p_{ij})\cdot
  f_{\Be}(p_{ij}\mid \delta_{i}, K_{i})\Bigg)\\
 \cdot
f_{\Be}(\delta_{i}\mid \alpha, \beta)\cdot f_{\Ga}(K_{i}\mid
\kappa,\lambda)\Bigg)\cdot f(\alpha,\beta,\kappa,\lambda),
  \end{multline*}
  where $f_{\Bi}(\,\cdot\,\mid p_{ij})$ denotes the density of the
  $\Bi(m_{i}, p_{ij})$ distribution, and
  $f_{\Be}(\,\cdot\,\mid \delta_{i},K_{i})$ and $f_{\Be}(\,\cdot\,\mid
  \alpha, \beta)$ denote the density of
  the $\Be_{[0,\sfrac{1}{2}]}(2\delta_{i}\cdot K_{i},
  (1-2\delta_{i})\cdot K_{i})$ and $\Be_{[0,\sfrac{1}{2}]}(\alpha,
  \beta)$ distribution, respectively. Moreover, $f_{\Ga}(\,\cdot\,|
  \kappa,\lambda)$ denotes the density of the
  $\Ga(\kappa,\lambda)$-distribution,
  $f(\alpha,\beta,\kappa,\lambda)$ denotes some joint prior of the
  parameters $\alpha$, $\beta$, $\kappa$ and $\lambda$, and
  $\mathbf{x}$ denotes the vector of all measurements. Due to the 
  (practically) very large number of parameters the simulation based
  on this posterior is computationally intractable. 

  To overcome this, we make use of an empirical Bayesian
  approach, meaning that we approximate the
  ``expensive'' posterior $f(\mathbf{p}, \boldsymbol{\delta},
  \mathbf{K}, \alpha, \beta, \kappa, \lambda \,|\, \mathbf{x})$ by the
  joint density $f(\mathbf{p}, \boldsymbol{\delta}, \mathbf{K}\mid
  \hat\alpha, \hat\beta, \hat\kappa, \hat\lambda, \mathbf{x})$ (which is also
  called ``pseudo posterior'') with empirically estimated parameters
  $\hat\alpha$, $\hat\beta$, $\hat\kappa$ and $\hat\lambda$. 
\end{remark}

We estimate the parameters according to the model hierarchy:
\begin{itemize}
\item from the measurements $x_{ij}$, we estimate the flipping
  probabilities $p_{ij}$,
\item from the estimated flipping probabilities, we estimate the
  parameters $\delta_{i}$ and $K_{i}$ for $i = 1,2,\ldots,
  m_{\operatorname{dev}}$,
\item and from these estimated parameters, we estimate the
  hyperparameters $\alpha$, $\beta$, $\kappa$ and $\lambda$.
\end{itemize}
The estimates of the hyperparameters are then depending on the
estimation techniques used for the different parameter
layers. For example, possible approaches are the \emph{method of moments},
\emph{maximum likelihood estimation} (MLE), or the construction of a
Bayes estimator.

By MLE for the $p_{ij}$, we obtain $\hat p_{ij} =
\frac{x_{ij}}{m_{i}}$. In this case, this MLE-estimator coincides with the 
estimator obtained by the method of moments. Another approach to estimate this parameter
(in the context of Bayesian statistics) is to choose the expected value of the
posterior obtained with respect to a scaled Jeffreys prior,
$p_{ij}\sim \Be_{[0,\sfrac{1}{2}]}(\sfrac{1}{2}, \sfrac{1}{2})$. The
posterior distribution is then given by
\[f(p_{ij}\,|\, x_{ij}) \propto \frac{p_{ij}^{x_{ij}}(1-p_{ij})^{m_{i}-x_{ij}}}{\sqrt{(2p_{ij})(1-2p_{ij})}}, \] forcing
us to compute the expectation numerically as
\[ \E(p_{ij}\mid x_{ij}) = \hat p_{ij} = \int_{0}^{\sfrac{1}{2}} p_{ij}\cdot
  f(p_{ij}\mid x_{ij})~dp_{ij},  \]
or to approximate it by Monte Carlo simulation. An advantage of this approach for
the estimation of the flipping probabilities is that it avoids an
underestimation of the flipping probabilities in the zero error
case. This is because the method of moments and MLE yield a flipping
probability of $0$ for cells without observed errors, which is not
realistic. Moreover, this approach allows us to compute sensible
credible intervals for  these probabilities, whereas the usual
confidence intervals based on the MLE would have zero lengths and thus
be meaningless. 

Given the estimates of $p_{i1}$, $\ldots$, $p_{i c_{i}}$, we can
estimate the parameters $\delta_{i}$ and $K_{i}$, either again by MLE,
by the method of moments or by a Bayesian approach similar to the one above,
where we use the joint noninformative prior 
\[p(\delta, K) \propto \frac{1}{K\cdot \sqrt{(2\delta)(1-2\delta)}}.\] 

After estimating the $\delta_{i}$ and $K_{i}$, we may use these values
to obtain estimations for the hyperparameters $\alpha$, $\beta$,
$\kappa$ and $\lambda$. As we want to avoid proposing more priors for
these parameters, we will use either MLE or the method of moments.

\section{Results and Applications}\label{sec:applications}

On the basis of the statistical model proposed in the previous
section, we will determine a suitable method for parameter estimation
of this model in Section~\ref{sec:simStudy} by comparing possible
approaches in a simulation study. Afterwards, in
Section~\ref{sec:appData}, we will estimate the parameters of our
model (based on the superior estimation method from the simulation
study) for real measurements from the CODES project. Finally, in Section~\ref{sec:corrAndRed}, we will use
the posterior predictive distribution based on our real data to
evaluate approaches to correct and stabilize the SRAM-PUF responses. 

\subsection{Simulation study}\label{sec:simStudy}

We are interested in comparing different parameter estimation methods
as discussed in Section~\ref{sec:paramEst} for the proposed
statistical model. In order to choose the ``best'' estimation method,
we will estimate these hyperparameters from simulated data with known
hyperparameters. The quality of these estimation methods will then be
compared by the value of a quadratic loss function for the parameter
vector: $L(\theta, \hat\theta) = \|\theta - \hat\theta\|^{2}$, where
$\theta\in \{(\alpha, \beta)^{\top}, (\kappa,\lambda)^{\top}\}$. Note
that we are especially interested in a good estimation of the
parameters $\alpha$ and $\beta$ of the beta distribution modeling the
mean error rates.

We will generate the simulation data ($m_{\operatorname{dev}}=20$
SRAM-devices with $c = 10000$ cells each and $m = 500$ simulations per
device) from the following parameters\footnote{These parameters are
  roughly based on parameters we used for testing in the CODES project.}: 
\[ \alpha = 100,\quad \beta = 900,\quad \kappa = 800,\quad \lambda =
900.  \]

Concretely, there are $8$ methods of parameter estimation we will
compare. These methods originate from the different possibilities to
estimate the various parameter hierarchies. Let $x_{ij}$ denote the
number of assumed error states of the $j$-th cell in device $i$.
\begin{itemize}
\item The flipping probabilities $p_{ij}$ can be estimated by the
  method of moments (which, in this case, coincides with maximum
  likelihood estimation) by $\hat p_{ij} = \frac{x_{ij}}{m}$, or by
  computing the Bayes-estimator with respect to the Jeffreys
  $\Be_{[0, \sfrac{1}{2}]}(\sfrac{1}{2}, \sfrac{1}{2})$-prior. 
\item The parameters $\delta_{i}$ and $K_{i}$ can be estimated either
  by the method of moments, yielding the estimators
  \[ \hat \delta_{i} =
  \frac{1}{c} \sum_{j=1}^{c} \hat p_{ij},\quad \hat K_{i} =
  \frac{2\hat\delta_{i} (1-2\hat\delta_{i})}{\frac{4}{c-1}\cdot
    \sum_{j=1}^{c} (\hat p_{ij} - \hat\delta_{i})^{2}} - 1,  \]
  by maximum likelihood estimation with the \texttt{R}-package
  \texttt{maxLik} (cf.\ \cite{maxLikPkg}), or by using a Bayesian
  approach and computing the mode of the joint (independence)
  posterior distribution subject to the Jeffreys $\Be_{[0, \sfrac{1}{2}]}(\sfrac{1}{2},
  \sfrac{1}{2})$-prior for $\delta_{i}$ and the non-informative
  $\frac{1}{\vartheta}$-prior for $K_{i}$.

\item Finally, the hyperparameters $\alpha$, $\beta$ from the proposed
  beta distribution of the $\delta_{i}$ and the parameters $\kappa$, $\lambda$ from
  the proposed gamma distribution of the $K_{i}$ can be estimated by
  the method of moments, which yields
  \[ \hat\alpha = 2\overline{\delta}\left(\frac{2\overline{\delta}
      (1-2\overline{\delta})}{4v_{\delta}} - 1\right),\quad 
  \hat\beta = (1-2\overline{\delta})\left(\frac{2\overline{\delta}
      (1-2\overline{\delta})}{4v_{\delta}} - 1\right),\quad
  \hat\kappa = \frac{\overline{K}^{2}}{v_{K}},\quad \hat\lambda =
  \frac{\overline{K}}{v_{K}}, \]
  where $\overline{\delta}$, $v_{\delta}$, $\overline{K}$ and $v_{K}$
  denote the means and sample variances of the $\hat\delta_{i}$ and
  the $\hat K_{i}$, respectively---or by maximum likelihood
  estimation. As we want to refrain from proposing another set of
  priors for these parameters, we will not use Bayesian estimation for
  $\alpha$, $\beta$, $\kappa$ and $\lambda$. 
\end{itemize}

In order to compare the various possible combinations of parameter
estimation methods, we used a quadratic loss function to measure the distance
from the original parameters. After performing $10000$ simulations,
and investigating the respective mean losses (which can be found in
Table~\ref{tab:simStudy}), we find that estimating the $p_{ij}$ and
the parameters $\delta_{i}$ and $K_{i}$ with Bayesian methods, as well
as the hyperparameters $\alpha$, $\beta$, $\kappa$ and $\lambda$ with
maximum likelihood estimation yields the lowest overall loss (where
the estimates of all four hyperparameters are taken into account) as
well as the lowest loss for just the parameters $\alpha$ and $\beta$
of the beta distribution. However, the lowest loss for the parameters
$\kappa$ and $\lambda$ of the gamma distribution originates from
estimating the $p_{ij}$ with Bayesian methods, but using the method of
moments to estimate everything else.  

Note that even although the approach where we estimate $p_{ij}$ as
well as $\delta_{i}$ and $K_{i}$ by Bayesian means, and the remaining
parameters by MLE yields the lowest loss function with respect to the
parameters $\alpha$ and $\beta$, this estimator is rather conservative
with respect to the expected mean error rate $\E\delta =
\frac{1}{2}\cdot \frac{\alpha}{\alpha + \beta} = 0.05$: taking the
average value over the $\frac{1}{2} \cdot \frac{\hat\alpha}{\hat\alpha
+ \hat\beta}$ obtained in the simulation study (where $\hat\alpha$ and
$\hat\beta$ have been constructed as mentioned above) yields a value
of $0.0618$, which can be contributed to the high number of
observed very unstable bits. In order to
cover the occurrence of such bits, the expected average error rate is
increased in the MLE-estimation. Therefore, estimating
the parameters with this approach yields a model, which possesses
a certain ``robustness'' regarding a decline of the PUF's
stability. The most accurate approximation of the expected mean error
rate is obtained by estimating all parameters by the method of moments.

\begin{table}[ht]
  \centering
  \begin{tabular}[ht]{l|l|l|r|r}
    $\hat p_{ij}$ & $\hat\delta_{i}$, $\hat K_{i}$ & $\hat\alpha$,
    $\hat\beta$, $\hat\kappa$, $\hat\lambda$ & Mean loss ($\alpha$, $\beta$) & Mean
    loss ($\kappa$, $\lambda$) \\ \hline 
    Moments/MLE & Moments & Moments & $10684999.5$ & $202115.0$ \\
    Moments/MLE & MLE & MLE & $6573128.9$ & $397729.0$ \\
    Bayes & Moments & Moments & $11077285.5$ & $\underline{200906.9}$ \\
    Bayes & Moments & MLE & $22138561.9$ & $236938.5$ \\
    Bayes & MLE & Moments & $15936758.9$ & $917774.9$ \\
    Bayes & MLE & MLE & $7831878.6$ & $750779.0$ \\
    Bayes & Bayes & Moments & $14402261.7$ & $968651.6$ \\
    \textbf{Bayes} & \textbf{Bayes} & \textbf{MLE} & $\underline{3042068.4}$ & $1895868.2$ 
  \end{tabular}
  \caption{Average loss function values from the simulation study
    ($10000$ simulations).}
  \label{tab:simStudy}
\end{table}

\subsection{Parameter estimation for real data}\label{sec:appData}

We are investigating measurements originating from
$m_{\operatorname{dev}} = 15$ different SRAM-PUF devices, each of them
with $c = 2^{16}$ SRAM cells. Note that our given
measurements were carried out on ASICs that have been manufactured in
TSMC $65 \si{nm}$ CMOS technology within a European multi-project
wafer run. The ASIC has been designed within the
UNIQUE\footnote{\url{http://www.unique-project.eu}} research project.

For each device, we have $340$
evaluations---however, the first $50$ measurements are discarded
because they were conducted during an aging process, which slightly
changed the behavior of the SRAM-PUFs. Afterwards, during the
remaining $m = 290$ measurements, the devices are stable again,
meaning that we will focus our analysis on these measurements.

For the parameter estimation, we will follow the results of the
simulation study, meaning that we will estimate the flipping
probabilities $p_{ij}$ and the parameters $\delta_{i}$ and $K_{i}$ by
the Bayesian approaches described above, and the four parameters
$\alpha$, $\beta$, $\kappa$ and $\lambda$ by maximum likelihood estimation.

This results in the following estimates:
\[ \hat\alpha = 9378.324\quad \hat\beta = 81409.79,\quad \hat\kappa = 7166.669,\quad
\hat\lambda = 3965.296. \]

In Figure~\ref{fig:paramDens}, we plotted histograms for the
(respectively) estimated $\delta_{i}$ and $K_{i}$, as well as the
densities of the proposed probability distributions.

\begin{figure}[ht]
  \centering
  \begin{minipage}[ht]{0.49\linewidth}
    \includegraphics[width=1\linewidth]{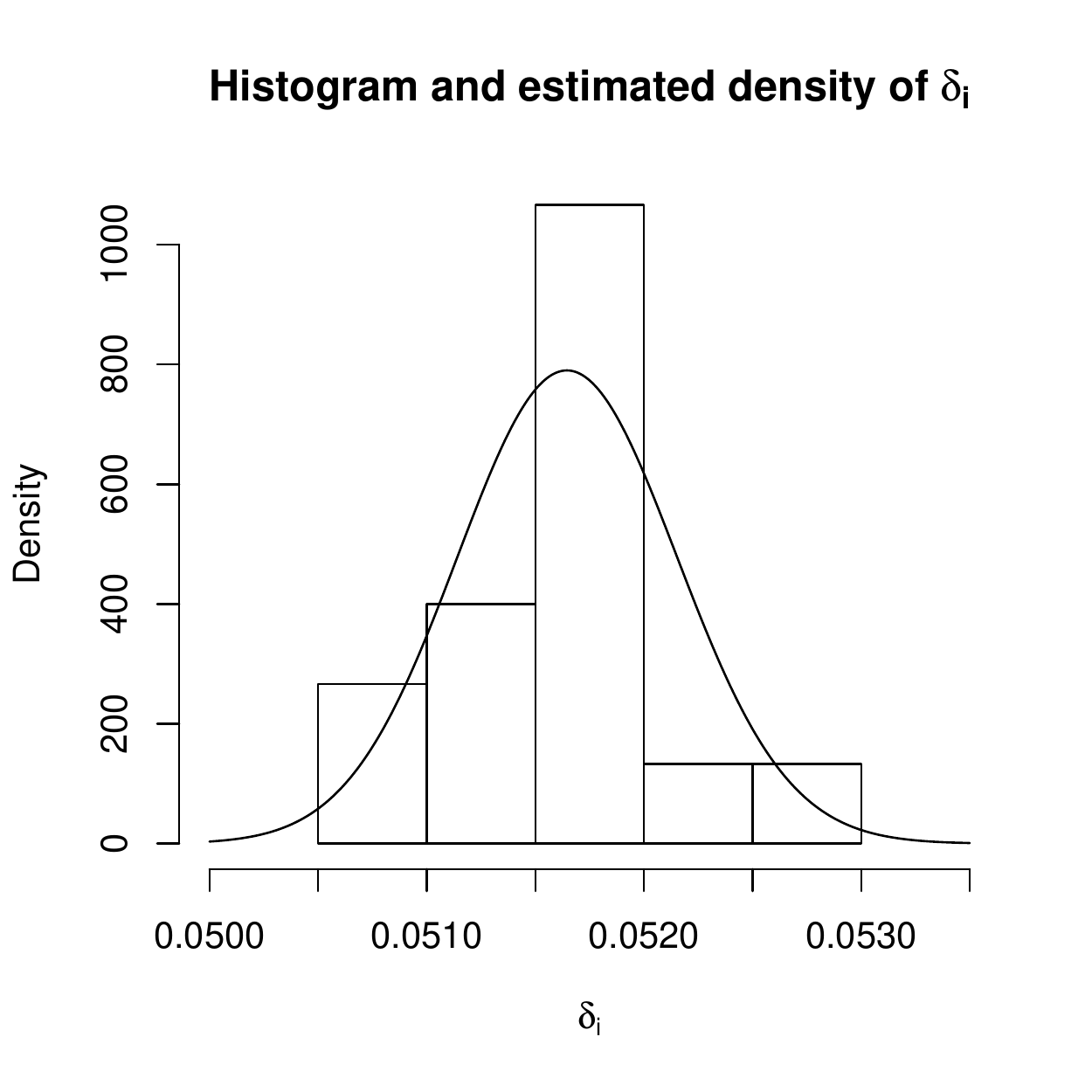}
  \end{minipage}\hfill
  \begin{minipage}[ht]{0.49\linewidth}
    \includegraphics[width=1\linewidth]{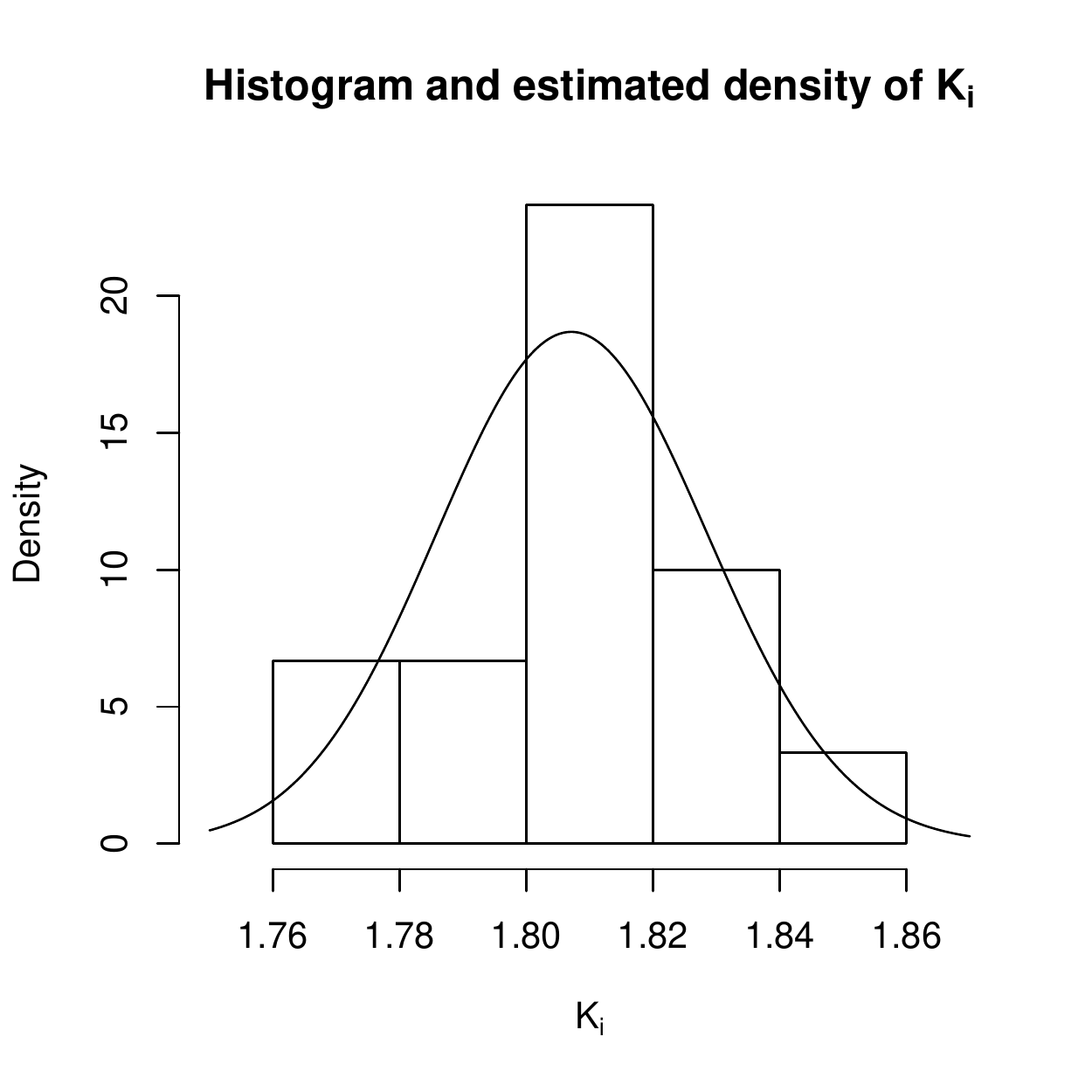}
  \end{minipage}
  \caption{Histograms and estimated densities for $\delta_i$ and $K_i$.}
  \label{fig:paramDens}
\end{figure}

Furthermore, we may compute the expected values for the parameters
$\delta$ and $K$, and construct credible intervals. From the estimated
parameters we obtain
\[ \E \delta = \frac{1}{2} \frac{\alpha}{\alpha + \beta}
\approx 0.05165,\quad \E K = \frac{\kappa}{\lambda} \approx 1.8073,  \]
and empirical $95\%$ credible intervals
\[ \delta\in [0.05066, 0.05264],\quad K\in [1.76574, 1.84943].  \]

\subsection{Error correction and reduction}\label{sec:corrAndRed}

In practice, we are interested in stabilizing the responses of a PUF
such that it can be used for system security related aspects like
constructing a challenge-response system without the need of storing a
master key. We want to present two general approaches (concentrating
on error correction and error reduction) for stabilizing these
responses and use our proposed model in order to evaluate their
effectiveness in specific examples.

Based on the parameters $\alpha$, $\beta$, $\kappa$ and $\lambda$
estimated in the previous section, we may investigate the
posterior-predictive distribution for the flipping probabilities. Its
density function is given by
\begin{equation*}
f(p\mid \hat\alpha,\hat\beta,\hat\kappa,\hat\lambda, \mathbf{x}) = \int_{\delta} \int_{K}
f_{\Be}(p\mid \delta, K)\cdot f_{\Be}(\delta\mid \hat\alpha,\hat\beta)\cdot f_{\Ga}(K\mid \hat\kappa,\hat\lambda)~dK~d\delta
\end{equation*}
for $p\in (0,\sfrac{1}{2})$ and $0$ otherwise. By simulation of a
sample of size $100000$, we obtain $\overline{p} = 0.05187$ as an
approximation for the expected value of the posterior-predictive
distribution under the empirically estimated parameters $\hat\alpha$,
$\hat\beta$, $\hat\kappa$ and $\hat\lambda$ from above. 

We are interested in the number of errors in an $\ell$-bit SRAM-PUF
response, where the bit-wise error probabilities are distributed
according to the posterior-predictive distribution from above. For
fixed error probabilities, the quantity of errors follows a
generalized binomial distribution. Due to Theorem~\ref{thm:compound},
the resulting compound distribution is a $\Bi(\ell,
\overline{p})$-distribution. This distribution can now be used to
compute the probability that a given error correction mechanism fails.

\begin{example}
  Assume that an SRAM-PUF is embedded within a construction which
  allows the correction of up to $239$ bits in responses of length
  $\ell = 1953$. Following the model above, the expected number of errors is
  $101.3101$, and the probability that the PUF does not work properly
  (i.e.\ the probability that more than $239$ errors occur) is
  negligibly small (less than $10^{-20}$).
\end{example}

Instead of designing powerful mechanisms for error correction which are able to
compensate for the noise an SRAM-PUF produces, another approach is to
``ignore'' SRAM-cells for which a high flipping probability is 
known or has been estimated. Assuming that we are investigating $\ell$
bit SRAM-PUF responses, it is an interesting question how the removal
of $r\ll \ell$ unstable bits influences the noise behavior.

In general, a good measure to judge the effect of ignoring the $r$
``worst'' bits is the average flipping probability of the
remaining cells, that is if $p_{1}$, \ldots, $p_{\ell}$ are the
respective flipping probabilities, and $p_{(1)}\leq \cdots \leq
p_{(\ell)}$ denote the related ordered probabilities, then we are
interested in $\frac{1}{\ell - r}\cdot \sum_{j=1}^{\ell - r}
p_{(j)}$. The assumption that these probabilities are realizations of
identical and independently distributed random variables leads us to
\emph{order statistics}. 

\begin{definition}[Order statistics]
  Let $X_{1}$, \ldots, $X_{n}$ be identical and independently
  distributed random variables with respect to some distribution
  $X$. Then the \emph{ordered} random variables $X_{(k)}$ with
  $X_{(1)} \leq X_{(2)} \leq \cdots \leq X_{(n)}$ are called
  $k$-th smallest \emph{order statistic} of size $n$ with respect to $X$.
\end{definition}

The following theorem states a central result from the theory of order
statistics, a proof can be found in \cite{orderStatLit}.

\begin{theorem}[Order statistics and the beta distribution]
  The density function of the $k$-th smallest order statistic
  $U_{(k)}$ of size $n$ with respect to the uniform distribution on
  the interval $[0,1]$ is
  \[ f_{U_{(k)}}(u) = \begin{cases} \frac{n!}{(k-1)!\,
        (n-k)!}\cdot u^{k-1}\cdot (1-u)^{n-k} & \text{for } u\in
      (0,1),\\ 0 & \text{else,} \end{cases}  \]
  which is the density of a beta distribution with parameters $k$ and
  $n-k+1$. Therefore, we have $U_{(k)}\sim \Be(k, n-k+1)$.
\end{theorem}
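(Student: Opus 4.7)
The plan is to compute the cumulative distribution function of $U_{(k)}$ on $[0,1]$ and then differentiate. Outside the unit interval the CDF is constant (either $0$ or $1$), so the density vanishes there, which already matches the trivial case of the piecewise definition. For $u\in[0,1]$, the key observation is that $U_{(k)}\leq u$ if and only if at least $k$ of the $n$ samples $U_1,\ldots,U_n$ fall into $[0,u]$. Since each $U_i$ lies in $[0,u]$ with probability $u$ independently, the counting random variable $N_u:=\#\{i:U_i\leq u\}$ is $\Bi(n,u)$-distributed, and therefore
\[ F_{U_{(k)}}(u) = \P(N_u\geq k) = \sum_{j=k}^{n}\binom{n}{j} u^{j}(1-u)^{n-j}. \]

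Next, I would differentiate termwise. The product rule together with the standard identities $j\binom{n}{j}=n\binom{n-1}{j-1}$ and $(n-j)\binom{n}{j}=n\binom{n-1}{j}$ yields
\[ f_{U_{(k)}}(u) = n\sum_{j=k}^{n}\binom{n-1}{j-1} u^{j-1}(1-u)^{n-j} - n\sum_{j=k}^{n}\binom{n-1}{j} u^{j}(1-u)^{n-j-1}. \]
Re-indexing the first sum via $j\mapsto j+1$, every term with index $\geq k$ cancels exactly against the corresponding term of the second sum, and the only surviving contribution is the $j=k-1$ term from the shifted first sum, namely
\[ n\binom{n-1}{k-1} u^{k-1}(1-u)^{n-k} = \frac{n!}{(k-1)!\,(n-k)!}\, u^{k-1}(1-u)^{n-k}, \]
which is precisely the claimed beta density.

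The main obstacle is really just bookkeeping in the telescoping cancellation: nothing deep, but it requires careful index tracking so that the boundary term at $j=n$ and the lower limit at $j=k$ line up correctly. An alternative route would be the direct heuristic argument, namely that one of the $n$ samples should lie in the infinitesimal window $[u,u+du]$ (factor $n\,du$), $k-1$ of the remaining $n-1$ must fall into $[0,u]$ (factor $\binom{n-1}{k-1}u^{k-1}$), and the other $n-k$ must fall into $[u,1]$ (factor $(1-u)^{n-k}$). Made rigorous via the joint density of the order statistics, this reaches the same conclusion and avoids the telescoping step; I would nevertheless prefer the CDF-differentiation approach because it stays strictly computational and does not require introducing the full joint density.
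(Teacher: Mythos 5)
Your argument is correct. Note, however, that the paper does not actually prove this theorem: it only cites the literature (``a proof can be found in \cite{orderStatLit}''), so there is no in-paper proof to compare against. Your route --- writing $F_{U_{(k)}}(u)=\P(N_u\geq k)$ with $N_u\sim\Bi(n,u)$ and differentiating the binomial tail termwise --- is a standard and fully rigorous derivation; the identities $j\binom{n}{j}=n\binom{n-1}{j-1}$ and $(n-j)\binom{n}{j}=n\binom{n-1}{j}$ do produce the telescoping you describe, the $j=n$ boundary term of the second sum vanishes because $\binom{n-1}{n}=0$, and the surviving term $n\binom{n-1}{k-1}u^{k-1}(1-u)^{n-k}$ equals the claimed density. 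The alternative heuristic you mention (one sample in an infinitesimal window, $k-1$ below, $n-k$ above) is in fact the argument most commonly given in references such as the one the paper cites, and it generalizes more readily to the non-uniform case and to joint densities of several order statistics --- which is relevant here, since the paper's subsequent remark about joint densities of order statistics relies on exactly that machinery. Your CDF approach buys a purely computational, self-contained proof at the cost of the index bookkeeping; either is acceptable.
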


By the technique of \emph{Probability Integral Transform}, this result
may be used to express the density function of an arbitrary
continuous random variable $X$ with distribution function $F_{X}$ and
density function $f_{X}$: in this case, we obtain
\[ f_{X_{(k)}}(x) = \frac{n!}{(k-1)!\, (n-k)!}\cdot 
\left[F_{X}(x)\right]^{k-1}\cdot \left[1 - F_{X}(x)\right]^{n-k}\cdot
f_{X}(x)   \]
for the density function of the $k$-th smallest order statistic of
size $n$ with respect to the distribution of $X$.

\begin{remark}
  Note that as for $1\leq j < k \leq n$ the relation $X_{(j)}\leq
  X_{(k)}$ holds, the order statistics are \emph{not} independently
  distributed any more. This means that if we would like
  to compute some compound distribution of, for example, the
  generalized binomial distribution and these order statistics with
  respect to a (scaled) beta distribution (which could be used
  to predict the probability that an $\ell$-bit PUF response with the
  $r$ most unstable bits removed is still too noisy for correction),
  we would have to consider the respective joint densities (which can
  be found in \cite{orderStatLit}) when integrating over the respective
  parameters. As we focus on the scaled beta distribution, the arising
  integrals cannot be computed analytically (mainly because of the
  occurring products of incomplete beta functions)---however, by
  simulating the procedure, i.e.\ generating $N$ beta-distributed
  samples of size $\ell$ and removing the $r$ largest values, we may
  still give good approximations for some interesting parameters
  within specific examples.
\end{remark}

In the setting above, we are interested in the expected flipping
probability, averaged over all remaining SRAM-cells, that is
\[ \E\left[\frac{1}{\ell - r}\cdot \sum_{j=1}^{\ell - r}
  \pi_{(j)}\right] = \frac{1}{\ell -r}\cdot \sum_{j=1}^{\ell - r} \E
\pi_{(j)}.  \]
Thus, we primarily want to compute the expected value of the order
statistics. For the same reasons as mentioned in the remark above,
this is not possible analytically for arbitrary
distributions. However, for a special case of the scaled beta
distribution it is actually possible, and that is for $\Be(\alpha,
\beta)$ with $\alpha = 1$ or $\beta = 1$.

\begin{proposition}\label{prop:simplifiedOrderStat}
  Let $\pi_{1}$, \ldots, $\pi_{n}$ be independent and identically
  distributed random variables following a $\Be_{[a,b]}(\alpha,
  1)$-distribution. The expected value of the $k$-th smallest order
  statistic of size $n$ is then given by
  \[ \E \pi_{(k)} = a + (b-a)\cdot \frac{B(k + \sfrac{1}{\alpha}, n - k
    + 1)}{B(k, n-k+1)}.  \]
\end{proposition}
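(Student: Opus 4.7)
The plan is to reduce everything to the standard $\Be(\alpha,1)$-distribution on $[0,1]$, whose CDF is elementary, and then use the order statistic density formula stated just above the proposition. Writing $\pi_j = a + (b-a) P_j$ with $P_j \sim \Be(\alpha,1)$ iid, the affine transformation is monotone (since $b > a$), so the ordering is preserved: $\pi_{(k)} = a + (b-a) P_{(k)}$. By linearity of expectation, it therefore suffices to establish
\[
\E P_{(k)} = \frac{B(k + 1/\alpha,\, n-k+1)}{B(k,\, n-k+1)}.
\]

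First I would record that for $P \sim \Be(\alpha,1)$ we have $f_P(p) = \alpha p^{\alpha-1}$ and $F_P(p) = p^\alpha$ on $[0,1]$. Plugging these into the density formula from the paragraph following the order-statistics theorem gives
\[
f_{P_{(k)}}(p) = \frac{n!}{(k-1)!\,(n-k)!}\, p^{\alpha(k-1)}\,(1-p^\alpha)^{n-k}\,\alpha\, p^{\alpha-1}
\]
for $p \in (0,1)$. The key step is then to compute $\E P_{(k)} = \int_0^1 p\, f_{P_{(k)}}(p)\,dp$ via the substitution $u = p^\alpha$, so $du = \alpha p^{\alpha-1}\,dp$ and $p = u^{1/\alpha}$. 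The exponents collapse cleanly: the factor $\alpha p^{\alpha-1}\,dp$ becomes $du$, the factor $p^{\alpha(k-1)}$ becomes $u^{k-1}$, the extra $p$ contributes $u^{1/\alpha}$, and $(1-p^\alpha)^{n-k}$ becomes $(1-u)^{n-k}$. The integral reduces to
\[
\int_0^1 u^{k-1+1/\alpha}(1-u)^{n-k}\,du = B\bigl(k + \tfrac{1}{\alpha},\, n-k+1\bigr).
\]

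To finish, I would use $\frac{n!}{(k-1)!\,(n-k)!} = \frac{1}{B(k,\, n-k+1)}$, divide, and multiply the resulting expected value by $(b-a)$ and add $a$ to obtain the claimed formula for $\E \pi_{(k)}$. There is no genuine obstacle here; the only delicate point is bookkeeping of the exponents under the substitution $u = p^\alpha$, so I would present the calculation in a single line to make the cancellations transparent. The analogous identity for $\Be(1,\beta)$ (the case $\alpha = 1$ in the unscaled density) would follow by the symmetry $P \mapsto 1 - P$, should one wish to include it.
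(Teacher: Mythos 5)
Your proof is correct and follows essentially the same route as the paper: reduce to the unscaled $\Be(\alpha,1)$ case by monotonicity of the affine map and linearity of expectation, apply the order-statistic density formula with $F(p)=p^{\alpha}$, and evaluate the expectation via the substitution $u=p^{\alpha}$ to obtain the ratio of beta functions. The only cosmetic difference is that the paper writes the normalizing constant as $\alpha k\binom{n}{k}$ rather than $\frac{n!}{(k-1)!\,(n-k)!}$; the computation is otherwise identical.
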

\begin{proof}
  As scaled beta distributed random variables are affine-linearly
  transformed beta distributed random variables (which also holds for
  the related order statistics), and as the expectation is a linear
  operator, we may concentrate on the case $\pi_{1}$, \ldots,
  $\pi_{n}\stackrel{\text{iid}}{\sim} \Be(\alpha, 1)$.
  
  The density function of these random variables is given by $f(x) =
  \mathds{1}_{(0,1)}(x)\cdot\alpha \cdot x^{\alpha-1}$, and thus the
  distribution function has the shape
  \[ F(x) = \begin{cases} 0 & \text{for } x\leq 0, \\
    x^{\alpha} & \text{for } 0 < x < 1, \\ 
    1 & \text{for } x\geq 1.
  \end{cases}  \]
  As mentioned above, by \emph{Probability Integral Transform}, the
  density of the $k$-th smallest order statistic of size $n$ with
  respect to the $\Be(\alpha, 1)$-distribution has the form
  \begin{align*} 
    f_{\pi_{(k)}}(x) & = k \binom{n}{k}\cdot [F(x)]^{k-1}\cdot [1- F(x)]^{n-k}\cdot f(x)\\
    & = \mathds{1}_{(0,1)}(x)\cdot \alpha k \binom{n}{k} \cdot
    x^{\alpha k - 1}\cdot (1 - x^{\alpha})^{n-k}.
  \end{align*}
  The expected value thus reads
  \begin{equation*}
    \E \pi_{(k)}  = \alpha k \binom{n}{k}\cdot \int_{0}^{1} x^{\alpha
    k} (1-x^{\alpha})^{n-k}~dx,
  \end{equation*}
  which, after a change of variables $t = x^{\alpha}$, becomes
  \begin{equation*}
  \E \pi_{(k)}  = \frac{\int_{0}^{1} t^{k + \sfrac{1}{\alpha} - 1}
    (1-t)^{n-k}~dt}{B(k,n-k+1)}
   = \frac{B(k+\sfrac{1}{\alpha}, n-k+1)}{B(k,n-k+1)}.
  \end{equation*}
  Finally, by the transformation $\pi_{(k)}\mapsto a + (b-a)\cdot \pi_{(k)}$, the
  statement is proven.
\end{proof}
\begin{remark}
  An analogous statement holds for the $\Be_{[a, b]}(1,
  \beta)$-distribution. This follows directly from the fact that if
  $X$ follows a $\Be_{[a,b]}(\alpha, \beta)$-distribution, then the
  linearly transformed variable $Y = a+b - X$ follows a
  $\Be_{[a,b]}(\beta, \alpha)$-distribution. 
\end{remark}
\begin{example}
  We want to investigate responses of length $n = 16$ of an
  SRAM-PUF embedded within an error correction scheme such that up to
  $3$ errors can be corrected. For the sake of simplicity, we will
  assume that the cell-wise error probabilities are distributed
  according to a $\Be_{[0,\sfrac{1}{2}]}(\sfrac{1}{9}, 1)$-distribution (such that the
  mean error rate is $\frac{1}{2}\cdot
  \frac{\sfrac{1}{9}}{\sfrac{1}{9} + 1 } = 0.05$). Note that by
  ``ignoring'' bits of the PUF responses, also the error correction
  scheme gets weakened: for every $2$ ignored bits, the correction
  capacity reduces by $1$. Table~\ref{tab:OS1} contains the expected
  values of the respective order statistics (computed along the lines of
  Proposition~\ref{prop:simplifiedOrderStat}).
  \begin{table}[ht]
    \centering
  \scalebox{0.89}{\begin{tabular}[ht]{c||cccccccc}
      $k$ & $1$ & $2$ & $3$ & $4$ & $5$ & $6$ & $7$ & $8$ \\ \hline
      $\E P_{(k)}$ & $2.45\cdot 10^{-7}$ & $2.45\cdot 10^{-6}$ &
      $1.35\cdot 10^{-5}$ & $5.38\cdot 10^{-5}$ & $0.00017$ & $0.00048$ &
      $0.00122$ & $0.00279$ \\ \hline \hline
      $k$ & $9$ & $10$ & $11$ & $12$ & $13$ & $14$ & $15$ & $16$ \\
      \hline
      $\E P_{(k)}$ & $0.00594$ & $0.01189$ & $0.02260$ &
      $0.04110$ &  $0.07193$ & $0.12173$ & $0.2$ &
      $0.32$ 
    \end{tabular}}
    \caption{Order statistics -- expected values (simplified model).}
    \label{tab:OS1}
  \end{table}
  Furthermore, by simulation we are able to estimate the probability
  that a system failure (i.e.\ more errors than the correction scheme
  can handle) occurs when ignoring the $r$ most unstable cells. The
  results of this simulation (with $100000$ simulated PUFs) can be
  found in Table~\ref{tab:OS2}.
  \begin{table}[ht]
    \centering
    \scalebox{0.9}{\begin{tabular}[ht]{c||ccccccc}
      Ignored cells & $0$ & $1$ & $2$ & $3$ & $4$ & $5$ & $6$ \\ \hline
      Correction capacity & $3$ & $3$ & $2$ & $2$ & $1$ & $1$ & $0$ \\
      avg.\ sys.\ failure prob. & $0.00704$ & $0.00183$ &
      $0.00462$ & $0.00134$ & $0.00579$ & $0.00201$ & $0.02203$ \\
      max.\ sys.\ failure prob. & $0.49622$ & $0.35504$ &  $0.49410$ &
      $0.34375$ &  $0.52352$ &  $0.38755$ &  $0.67622$ 
    \end{tabular}}
    \caption{System failure probabilities (simplified model).}
    \label{tab:OS2}
  \end{table}
\end{example}

However, practically, the simplified model is not as precise as the
statistical model developed in the previous section. Therefore,
we investigate a similar example based on this more
sophisticated model from the previous section. In this case, all
parameters will have to be estimated by simulation.

\begin{example}
  We use the posterior-predictive distribution discussed at the
  beginning of this section to obtain flipping probabilities for the
  simulated SRAM-PUFs with $512$ cells each. We have plotted the
  results of this simulation (with $100000$ simulated SRAM devices) in
  Figure~\ref{fig:OS3}. Remarkably, the mean error rate can be
  reduced very quickly from slightly above $0.05$ to about $0.026$ by
  ignoring the $50$ most unstable bits per device (about $10\%$
  information loss). This demonstrates that the exclusion of unstable
  bits is a viable and practically relevant approach to increase the
  stability of an SRAM-PUF.

  \begin{figure}[ht]
    \centering
    \begin{minipage}[ht]{0.49\linewidth}
      \includegraphics[width=1\linewidth]{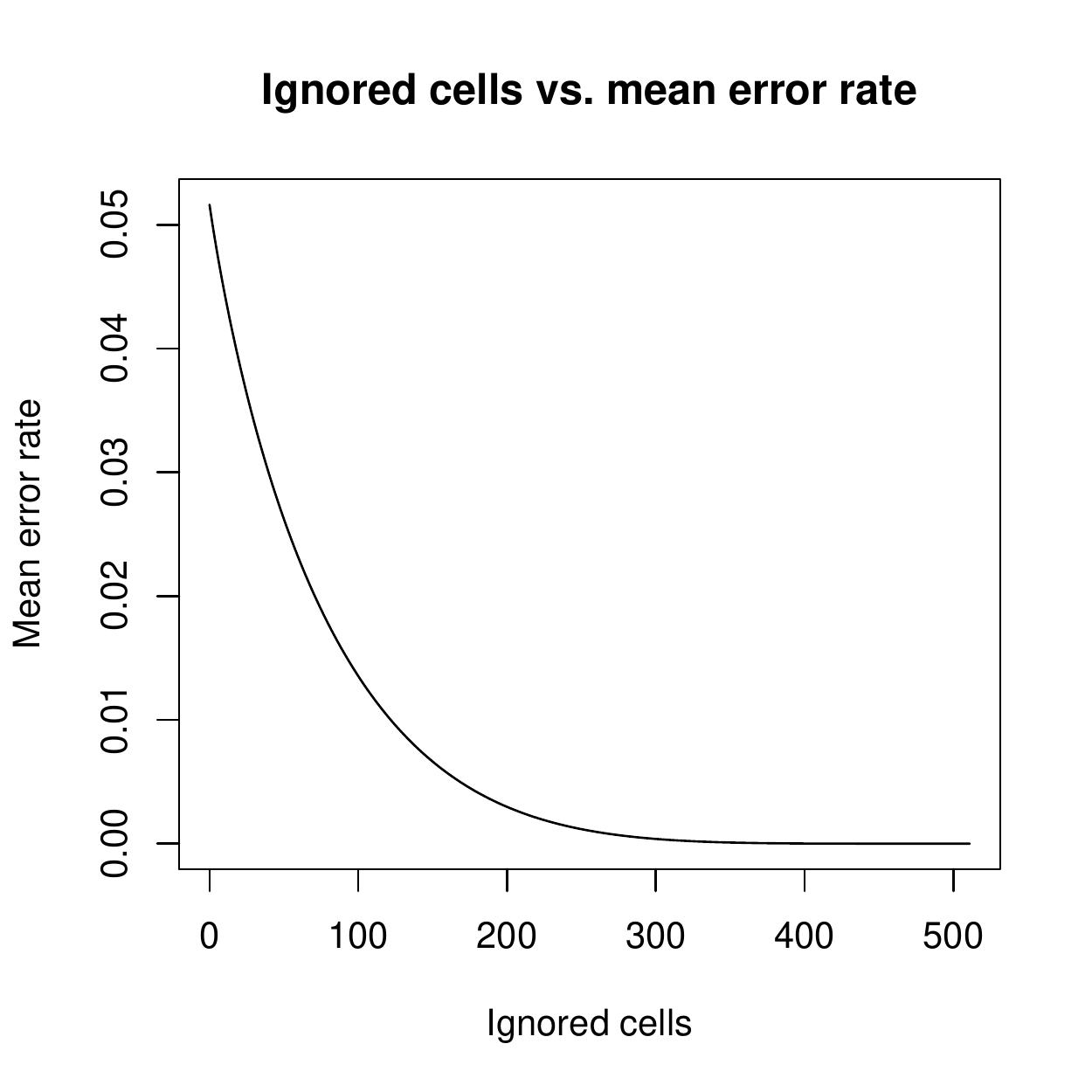}
    \end{minipage}\hfill
    \begin{minipage}[ht]{0.49\linewidth}
      \includegraphics[width=1\linewidth]{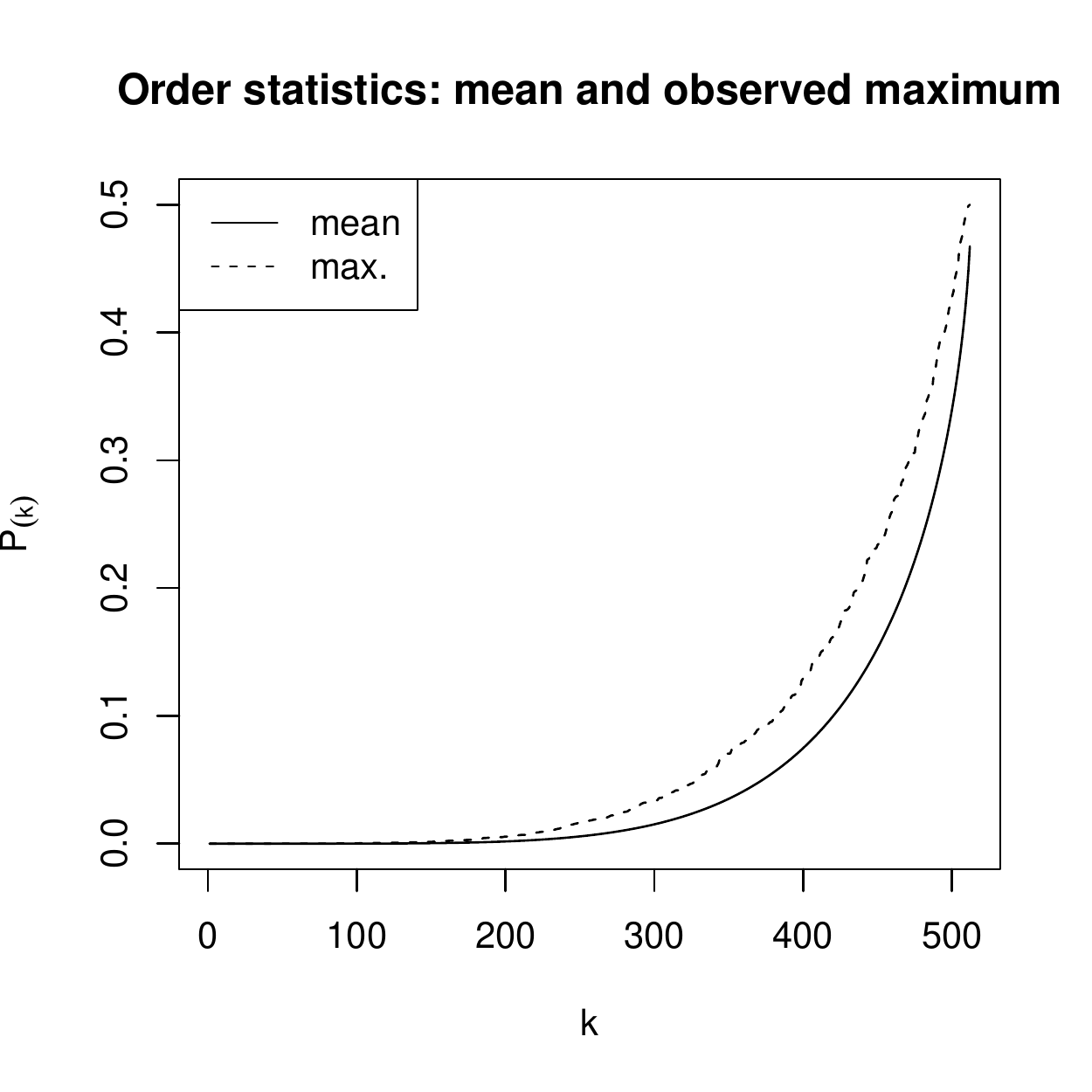}
    \end{minipage}
    \caption{Order statistics and mean error rate.}
    \label{fig:OS3}
  \end{figure}
\end{example}

\section{Conclusion}

The design of error correcting mechanisms capable of correcting the
noise emitted by SRAM-PUFs requires a precise statistical analysis. In
this paper, we presented the framework for such an analysis by proposing
a statistical model which captures the noise behavior of a collection
of SRAM-PUFs (cf.\ Section~\ref{sec:model} and
Section~\ref{sec:appData}). In practice, such a model allows for a
certain flexibility when designing and conducting statistical tests in
the context of quality assurance---which is very important, as these
tests are very expensive in general.

The second tool developed in this paper in order to ascertain precise
predictions for the number of errors in SRAM-PUF responses is the
generalized binomial distribution (cf.\ Section~\ref{sec:gbin}). This
distribution, in combination with the posterior-predictive
distribution for the error probabilities obtained from our given
measurements, permits an evaluation and the design of sufficiently
strong error correction mechanisms (cf.\ Section~\ref{sec:corrAndRed}). Finally, we showed that by ignoring the
most unstable parts of SRAM-PUF responses, the mean error rate could
be reduced significantly. Thus, weaker and simpler error correction
mechanisms could be used.
 
\bibliographystyle{plain}
\bibliography{sramPUFs-stat_paper}

\nocite{*}

\end{document}